\newif\ifnotesw\noteswtrue
\ifnotesw\marginpar[\hfill\(\top\)]{\(\top\)}\fi}%
\ifnotesw\marginpar[\hfill\(\bot\)]{\(\bot\)}\fi}
\newcommand{\mnote}[1]%
    {\ifnotesw\marginpar%
        [{\scriptsize\begin{minipage}[t]{\marginparwidth}
        \raggedleft#1%
                        \end{minipage}}]%
        {\scriptsize\begin{minipage}[t]{\marginparwidth}
        \raggedright#1%
                        \end{minipage}}%
    \fi}
\newcommand{\ignore}[1]{}
\newcommand{\etal}{{\it et al.~}}
\newtheorem{theorem}{Theorem}
\newtheorem{corollary}[theorem]{Corollary}
\newtheorem{lemma}[theorem]{Lemma}
\newcommand{\iverson}[1]{\lbrack\!\lbrack #1 \rbrack\!\rbrack}
\newcommand{\lip}{\langle}
\newcommand{\rip}{\rangle}
\newcommand{\NN}{\mathbb{N}}
\newcommand{\CC}{\mathbb{C}}
\newcommand{\ZZ}{\mathbb{Z}}
\newcommand{\GG}{\mathcal{G}}
\newcommand{\bra}[1]{\lip #1 |}
\newcommand{\ket}[1]{| #1 \rip}
\newcommand{\ketbra}[2]{| #1 \rip\lip #2 |}
\newcommand{\cart}{\mbox{ $\Box$ }}
\DeclareMathOperator{\diag}{diag}
\DeclareMathOperator{\sgn}{sgn}
\newcommand{\tdp}{\tilde{\delta}_{+}}
\newcommand{\tdm}{\tilde{\delta}_{-}}
\newcommand{\tdpm}{\tilde{\delta}_{\pm}}
\newcommand{\jj}{\mathbf{1}}
\newcommand{\SG}{\Sigma}
\newcommand{\Alt}{\mathsf{Alt}}
\newcommand{\wedgep}[2]{\bigwedge^{#2} #1}
\newcommand{\xwedgep}[2]{\mbox{$\bigwedge^{#2} #1$}}
\newcommand{\veep}[2]{{#1}^{\{ #2 \}}}
\newcommand{\symp}[2]{{#1}^{\odot {#2}}}
\newcommand{\symdif}{\triangle}
\title{
Perfect State Transfer on Signed Graphs 
}
\author{
John Brown\\SUNY Potsdam
\and
Chris Godsil\\University of Waterloo 
\and
Devlin Mallory\\UC Berkeley
\and
Abigail Raz\\Wellesley College
\and
Christino Tamon\footnote{Contact author: tino@clarkson.edu}\\Clarkson University
}
\date{\today}
\begin{document}
\maketitle
\bibliographystyle{plain}

\begin{abstract}
We study perfect state transfer of quantum walks on signed graphs.
Our aim is to show that negative edges are useful for perfect state transfer.
Specific results we prove include:
\begin{itemize}
\item The signed join of a negative $2$-clique with any positive $(n,3)$-regular graph 
	has perfect state transfer even if the unsigned join does not. 
	Curiously, the perfect state transfer time improves as $n$ increases.

\item A signed complete graph has perfect state transfer if its positive subgraph is a regular 
	graph with perfect state transfer and its negative subgraph is periodic.
	This shows that signing is useful for creating perfect state transfer since no complete graph
	(except for the $2$-clique) has perfect state transfer.

\item The double-cover of a signed graph has perfect state transfer if the positive subgraph has
	perfect state transfer and the negative subgraph is periodic. 
	Here, signing is useful for constructing unsigned graphs with perfect state transfer.
\end{itemize}
Furthermore, we study perfect state transfer on a family of signed graphs called 
the exterior powers which is derived from a many-fermion quantum walk on graphs.
\end{abstract}


\section{Introduction}

The study of quantum walks on finite graphs is important in quantum computing due to its promise as an 
algorithmic technique orthogonal to the Hidden Subgroup and Amplitude Amplification paradigms. Strong 
quantum algorithms based on quantum walks have been discovered in the ensuing years for diverse problems
such as element distinctness, matrix product verification, triangle finding in graph, formula evaluation,
and others. But recently, quantum walks have also proved crucial as a universal quantum computational model 
(see Childs \cite{childs09}).

A property of quantum walks called perfect state transfer was originally studied by Bose \cite{bose03} 
in the context of information transfer in quantum spin chains. Christandl \etal \cite{cdel04,cddekl05} 
continued this investigation and proved strong results for various other graphs, notably the hypercubes. 
More recently, perfect state transfer was used by Underwood and Feder \cite{uf10} in the simulation
of universal quantum computation via quantum walks. This provides an alternative method to the graph 
scattering techniques used in \cite{childs09}.

The main problem in the study of perfect state transfer in quantum walks on graphs is
to characterize graphs which exhibit this property. Much of the recent progress along 
these lines is described in Godsil \cite{godsil-dm11}. Another related question is
to ask for operations on graphs which create perfect state transfer. Examples of such operations
include deleting edges \cite{bcms09}, adding self-loops \cite{clms09}, and using weights on edges
\cite{f06}. The latter work by Feder \cite{f06} is based on a many-particle quantum walk on graphs
where the particles are bosons.

In this work, we study perfect state transfer of quantum walks on {\em signed} graphs.
A signed graph is a graph whose edges are given $\pm 1$ weights. The literature on
signed graphs is vast (see Zaslavsky \cite{z-survey}). 
Our main goal is to understand the impact of negative edges on perfect state transfer. 
To the best of our knowledge, Pemberton-Ross and Kay \cite{pk11} provided the first indication
that negative weights might be useful for perfect state transfer via dynamic couplings.
We show that negative edges are useful for creating perfect state transfer on certain classes of graphs
even without dynamic couplings. 
More specifically, we study the effect of signed edges on graph products, joins, quotients, 
and on signed graphs with certain spanning-subgraph decomposition properties. 

For graph joins, we show that the join between a negatively signed $K_{2}$ with
any $3$-regular unsigned graph has perfect state transfer; in contrast, the unsigned join
lacks this perfect state transfer property. Curiously, the perfect state transfer time in
the signed join decreases as the size of the $3$-regular graph increases.
Using the spanning-subgraph decomposition property of signed graphs, we show examples of
signed complete graphs with perfect state transfer. This is in contrast to the known fact
that unsigned complete graphs have no perfect state transfer (but are merely periodic).
We also show the opposite effect by constructing {\em unsigned} graphs with perfect state 
transfer from double-coverings of signed graphs.

Finally, we consider an interesting graph operator called the exterior power.
We observe that this operator creates, in a natural way, signed graphs from unsigned graphs.
This operator was studied by Osborne \cite{o06} (and also by Audenaert \etal \cite{agrr07}) 
although not in the context of signed graphs. The exterior power of a graph is related to
a many-particle quantum walk on graphs where the particle are fermions. More specifically, we show 
that the exterior $k$th power of a graph $G$ is the {\em quotient} graph corresponding to a $k$-fermion
quantum walk on $G$. This complements the result of Bachman \etal \cite{bfflott12} showing that Feder's 
weighted graphs (see \cite{f06}) are quotient graphs corresponding to a $k$-boson quantum walk 
on an underlying graph.


\section{Preliminaries}

We describe some notation which will be used throughout the paper.
For a logical statement $S$, we use $\iverson{S}$ to mean $1$ if $S$ is true, and $0$ otherwise.
Given a positive integer $n$, the notation $[n]$ denotes the set $\{1,\ldots,n\}$.
We use $A \uplus B$ to denote the disjoint union of sets $A$ and $B$.
The identity and all-one matrices are denoted $I$ and $J$, respectively; the latter may not necessarily be square. 

The graph $G=(V,E)$ we consider will be finite, undirected, and connected.
The adjacency matrix $A(G)$ of $G$ is defined as $A(G)_{u,v} = \iverson{(u,v) \in E}$.
A graph $G$ is called $k$-regular if each vertex of $G$ has exactly $k$ adjacent neighbors.
We say a graph $G$ is $(n,k)$-regular if it has $n$ vertices and is $k$-regular.

Let $G$ and $H$ be two given graphs.
The {\em Cartesian product} $G \cart H$ of graphs $G$ and $H$ has the vertex set $V(G) \times V(H)$ 
where the edges are defined as follows.
The vertex $(g_{1},h_{1})$ is adjacent to the vertex $(g_{2},h_{2})$ 
if $g_{1} = g_{2}$ and $(h_{1},h_{2}) \in E(H)$ or if $(g_{1},g_{2}) \in E(G)$ and $h_{1}=h_{2}$.
The adjacency matrix is given by $A(G \cart H) = A(G) \otimes I + I \otimes A(H)$.
The {\em join} $G + H$ of graphs $G$ and $H$ is a graph whose complement is $\overline{G} \uplus \overline{H}$.
The adjacency matrix of the join is given by
\begin{equation}
A(G + H) = \begin{bmatrix} A(G) & J \\ J & A(H) \end{bmatrix}.
\end{equation}
The dimensions of the matrices $I$ and $J$ used above are implicit (but clear from context).
Most notation we use above are adopted from \cite{godsil-royle01}.

\paragraph{Signed graphs}
A {\em signed} graph $\Sigma = (G,\sigma)$ is a pair consisting of a graph $G=(V,E)$ and 
a signing map $\sigma: E(G) \rightarrow \{-1,+1\}$ over the edges of $G$. 
For notational convenience, we may on occasion use $G^{\sigma}$ in place of $(G,\sigma)$.

We call $G$ the underlying (unsigned) graph of $\Sigma$; we also use $|\Sigma|$ to denote this underlying graph. 
Two signed graphs $\Sigma_{1}$ and $\Sigma_{2}$ are called {\em switching equivalent}, denoted
$\Sigma_{1} \sim \Sigma_{2}$, if 
\begin{equation}
A(\Sigma_{1}) = D^{-1}A(\Sigma_{2})D,
\end{equation}
for some diagonal matrix $D$ with $\pm 1$ entries.
A signed graph $\Sigma$ is {\em balanced} if $\Sigma \sim |\Sigma|$ and it is called
{\em anti-balanced} if $\Sigma \sim -|\Sigma|$ where $-|\Sigma|$ refers to the all-negative
signing of $|\Sigma|$.

Another way to view a signed graph $\Sigma$ is as two edge-disjoint spanning subgraphs of
the underlying graph $|\Sigma|$; that is, $\Sigma = G^{+} \cup G^{-}$, where the edges of
$G^{+}$ are signed with $+1$ and the edges of $G^{-}$ are signed with $-1$. 
The adjacency matrix of $\Sigma$ is then given as $A(\Sigma) = A(G^{+}) - A(G^{-})$. 

A classic reference on signed graphs is the paper by Zaslavsky \cite{z82}.

\paragraph{Quantum walks}
Given a graph $G$, a continuous-time quantum walk on $G$ is described by the time-dependent unitary matrix
\begin{equation}
U(t) = \exp(-it A(G)).
\end{equation} 
We say a graph $G$ has {\em perfect state transfer} from vertex $a$ to $b$ at time $t$ if 
\begin{equation}
|\bra{b}U(t)\ket{a}| = 1.
\end{equation} 
On the other hand, the graph $G$ is {\em periodic} at vertex $a$ at time $t$ if $|\bra{a}U(t)\ket{a}| = 1$. 
Finally, $G$ is called periodic if it is periodic at all of its vertices.
For more background on state transfer on graphs, we refer the reader to Godsil \cite{godsil-dm11}.


\section{Balanced Products}

We state some basic results for perfect state transfer on balanced (and anti-balanced) signed graphs.

\begin{lemma} \label{lemma:balanced}
If a graph $G$ has perfect state transfer, then so does the signed graph $\Sigma = (G, \sigma)$ 
provided $\sigma$ is a balanced or anti-balanced signing of $G$.
\end{lemma}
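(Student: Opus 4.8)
The plan is to reduce perfect state transfer on the signed graph $\Sigma$ to perfect state transfer on the unsigned graph $G$ by exploiting the switching equivalence directly. First I would recall that by hypothesis $\Sigma$ is balanced or anti-balanced, so there is a diagonal $\pm 1$ matrix $D$ with either $A(\Sigma) = D^{-1}A(G)D$ or $A(\Sigma) = D^{-1}(-A(G))D = -D^{-1}A(G)D$. In either case, conjugating commutes with the matrix exponential, so the quantum walk operators are related by $\exp(-itA(\Sigma)) = D^{-1}\exp(-itA(G))D$ in the balanced case and $\exp(-itA(\Sigma)) = D^{-1}\exp(+itA(G))D$ in the anti-balanced case.

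Next I would extract the transition amplitude between two vertices. Since $D$ is diagonal with $\pm 1$ entries, $\bra{b}D^{-1}MD\ket{a} = D_{bb}^{-1}D_{aa}\bra{b}M\ket{a} = \pm\bra{b}M\ket{a}$, and taking absolute values kills the sign, so $|\bra{b}\exp(-itA(\Sigma))\ket{a}| = |\bra{b}\exp(\mp itA(G))\ket{a}|$. In the balanced case this is immediately $|\bra{b}U_G(t)\ket{a}|$, so perfect state transfer on $G$ from $a$ to $b$ at time $t$ transfers verbatim to $\Sigma$ at the same time $t$. In the anti-balanced case I would observe that $\exp(itA(G)) = \overline{\exp(-itA(G))}$ because $A(G)$ is real symmetric, hence $|\bra{b}\exp(itA(G))\ket{a}| = |\overline{\bra{b}\exp(-itA(G))\ket{a}}| = |\bra{b}U_G(t)\ket{a}|$, so again we get value $1$ at the same time $t$.

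There is essentially no hard part here; the only thing to be slightly careful about is making the statement precise about which pair of vertices and which time the transfer occurs, namely that $\Sigma$ has perfect state transfer between exactly the same pair of vertices (up to the switching relabeling, which does not move vertices) at exactly the same time as $G$. The potential pitfall I would flag is making sure the anti-balanced argument is handled, since a reader might expect the time to change under $t \mapsto -t$; the symmetry of $A(G)$ under complex conjugation is what rescues the same positive time $t$. I would write the proof in three short lines: state the conjugation identity, pass to absolute values of the $(b,a)$ entry, and invoke the conjugation symmetry in the anti-balanced case.
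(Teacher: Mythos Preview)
Your proposal is correct and follows essentially the same route as the paper: both use the conjugation identity $A(\Sigma)=\pm D^{-1}A(G)D$ to get $\bra{b}e^{-itA(\Sigma)}\ket{a}=\pm\bra{b}e^{\mp itA(G)}\ket{a}$ and conclude. You are in fact slightly more careful than the paper in explicitly handling the anti-balanced case via $\exp(itA(G))=\overline{\exp(-itA(G))}$, a point the paper leaves implicit.
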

\begin{proof}
Suppose $G$ has perfect state transfer from vertex $a$ to $b$. If $\sigma$ is a balanced 
or anti-balanced signing of $G$, then there is a diagonal $\pm 1$ matrix $D$ for which 
$A(G^{\sigma}) = \pm D^{-1}A(G)D$. Thus, we have
\begin{equation}
\bra{b}e^{-it A(G^{\sigma})}\ket{a} = \bra{b}D^{-1}e^{\mp itA(G)}D\ket{a} 
	= \pm \bra{b}e^{\mp itA(G)}\ket{a}.
\end{equation}
This shows that $G^{\sigma}$ has perfect state transfer from $a$ to $b$.
\end{proof}

\begin{figure}[t]
\begin{center}
\begin{tikzpicture}[scale=1.0]
\foreach \x in {0,180}
{
    \node at (\x:1)[circle,fill=black][scale=0.6] {};
}
\foreach \x in {0,90,180,270}
{ 
	\draw[line width=0.2mm] (\x:1)--(\x+90:1);
}
\foreach \x in {90,270}
{
    \node at (\x:1)[circle,fill=white][scale=0.6] {};
    \draw[line width=0.2mm] (\x:1) circle (0.1cm);
}
\end{tikzpicture}	\quad	\quad
\begin{tikzpicture}[scale=1.0]
\foreach \x in {0,180}
{
    \node at (\x:1)[circle,fill=black][scale=0.6] {};
}
\foreach \x in {0,180}
{ 
	\draw[line width=0.2mm] (\x:1)--(\x+90:1);
}
\foreach \x in {90,270}
{ 
	\draw[dashed] (\x:1)--(\x+90:1);
}
\foreach \x in {90,270}
{
    \node at (\x:1)[circle,fill=white][scale=0.6] {};
    \draw[line width=0.2mm] (\x:1) circle (0.1cm);
}
\end{tikzpicture}	\quad	\quad
\begin{tikzpicture}[scale=1.0]
\foreach \x in {0,180}
{
    \node at (\x:1)[circle,fill=black][scale=0.6] {};
}
\foreach \x in {180,270}
{ 
	\draw[line width=0.2mm] (\x:1)--(\x+90:1);
}
\foreach \x in {0,90}
{ 
	\draw[dashed] (\x:1)--(\x+90:1);
}
\foreach \x in {90,270}
{
    \node at (\x:1)[circle,fill=white][scale=0.6] {};
    \draw[line width=0.2mm] (\x:1) circle (0.1cm);
}
\end{tikzpicture}	\quad	\quad
\begin{tikzpicture}[scale=1.0]
\foreach \x in {0,90,180,270}
{
    \node at (\x:1)[circle,fill=black][scale=0.6] {};
}
\foreach \x in {0,90,180}
{ 
	\draw[dashed] (\x:1)--(\x+90:1);
}
\foreach \x in {270}
{ 
	\draw[line width=0.2mm] (\x:1)--(\x+90:1);
}
\end{tikzpicture}
\caption{Signed $4$-cycles: (a) Unsigned; (b) Balanced; (c) Antibalanced; (d) Unbalanced.
Dashed edges are negatively signed. 
Perfect state transfer occurs between vertices marked white within each case.
}
\label{fig:balanced-q2}
\end{center}
\hrule
\end{figure}
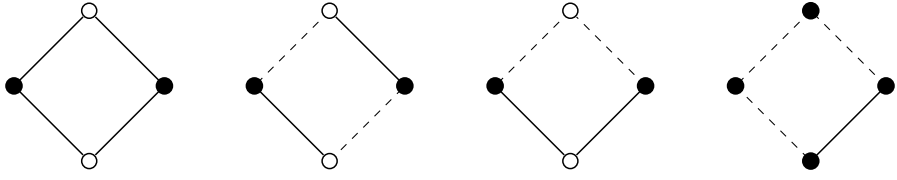

\begin{corollary}
For a positive integer $m$,
if for each $k \in [m]$, the graph $G_{k}$ has perfect state transfer from vertex $a_{k}$ to vertex $b_{k}$,
then the signed graph $\Box_{k=1}^{m} \Sigma_{k}$, where $\Sigma_{k} = (G_{k}, \sigma_{k})$, 
has perfect state transfer from vertex $(a_{1},\ldots,a_{m})$ to vertex $(b_{1},\ldots,b_{m})$,
provided each $\sigma_{k}$ is a balanced or anti-balanced signing of $G_{k}$.
\end{corollary}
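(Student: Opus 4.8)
The plan is to combine Lemma~\ref{lemma:balanced} with the standard tensor-product structure of quantum walks on Cartesian products. First I would recall that the adjacency matrix of the Cartesian product $\Box_{k=1}^{m} \Sigma_{k}$ is the ``Kronecker sum''
\begin{equation}
A\left(\Box_{k=1}^{m} \Sigma_{k}\right) = \sum_{k=1}^{m} I \otimes \cdots \otimes A(\Sigma_{k}) \otimes \cdots \otimes I,
\end{equation}
with $A(\Sigma_{k})$ in the $k$th slot and identities elsewhere; this follows by induction from the two-factor formula $A(G \cart H) = A(G) \otimes I + I \otimes A(H)$ given in the Preliminaries. Since the $m$ summands pairwise commute, the time-evolution operator factorizes as a tensor product,
\begin{equation}
\exp\!\left(-it A\left(\Box_{k=1}^{m} \Sigma_{k}\right)\right) = \bigotimes_{k=1}^{m} \exp(-it A(\Sigma_{k})).
\end{equation}
Consequently the transition amplitude factorizes:
\begin{equation}
\bra{(b_{1},\ldots,b_{m})} \exp\!\left(-it A\left(\Box_{k=1}^{m}\Sigma_{k}\right)\right) \ket{(a_{1},\ldots,a_{m})} = \prod_{k=1}^{m} \bra{b_{k}} \exp(-it A(\Sigma_{k})) \ket{a_{k}}.
\end{equation}

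Next I would invoke Lemma~\ref{lemma:balanced}: since each $\sigma_{k}$ is a balanced or anti-balanced signing of $G_{k}$ and $G_{k}$ has perfect state transfer from $a_{k}$ to $b_{k}$, the signed graph $\Sigma_{k}$ also has perfect state transfer from $a_{k}$ to $b_{k}$. The only subtlety is that Lemma~\ref{lemma:balanced} as stated does not pin down a single common time: it gives, for each $k$, some time $t_{k}$ with $|\bra{b_{k}} \exp(-it_{k} A(\Sigma_{k})) \ket{a_{k}}| = 1$. So I would need the graphs $G_{k}$ (equivalently the $\Sigma_{k}$) to admit a \emph{common} perfect-state-transfer time $t$; I would state this as the operative hypothesis (or note that in the intended applications, e.g.\ taking all $G_{k}$ equal, or rescaling, this is automatic). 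Granting a common time $t$, each factor on the right-hand side above has modulus $1$, hence so does the product, which is exactly perfect state transfer from $(a_{1},\ldots,a_{m})$ to $(b_{1},\ldots,b_{m})$ in $\Box_{k=1}^{m}\Sigma_{k}$.

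The main obstacle is precisely this bookkeeping about times: the cleanest route is to observe that if $G_{k}$ has perfect state transfer at time $t_{k}$, one can often arrange a common time (for instance, if all the $G_{k}$ are copies of a fixed graph, or by a standard argument that perfect state transfer times form a nice arithmetic structure), and then the factorization does the rest. I would therefore either (i) restate the corollary with ``at a common time $t$'' made explicit, or (ii) appeal to the fact that Cartesian products of graphs with perfect state transfer have perfect state transfer, a known result in the unsigned setting (see Godsil \cite{godsil-dm11} or \cite{cddekl05}), whose proof is exactly the tensor-product factorization above, and simply note that Lemma~\ref{lemma:balanced} lets each unsigned factor be replaced by its balanced or anti-balanced signed version without affecting the argument. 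Everything else is a routine induction on $m$ using the two-factor case.
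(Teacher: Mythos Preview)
Your approach is essentially the same as the paper's: both use the Kronecker-sum form of $A(\Box_{k}\Sigma_{k})$, commute the summands to get a tensor-product factorization of the evolution, and then apply Lemma~\ref{lemma:balanced} factorwise. The common-time subtlety you flag is real but is not something the paper resolves either; its proof simply writes a single $t$ throughout, so your proposal is at least as complete as the original on this point.
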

\begin{proof}
We note that $A(\Box_{k=1}^{m} G_{k})$ is a sum of $k$ commuting terms:
\begin{equation}
A(\Box_{k=1}^{m} G_{k}) 
= 
\sum_{k=1}^{m} 
		(I \otimes \ldots \otimes I \otimes 
			\overbrace{A(G_{k})}^{\mbox{\scriptsize $k$th position}} 
			\otimes I \otimes \ldots \otimes I).
\end{equation}
Thus, we have
\begin{equation}
\bigotimes_{j=1}^{m} \bra{b_{j}} \left[ e^{-it A(\Box_{k} G_{k})}\right]  \bigotimes_{\ell=1}^{m} \ket{a_{\ell}}
=
\prod_{k=1}^{m} \bra{b_{k}} e^{-it A(G_{k})} \ket{a_{k}}.
\end{equation}
Now, we apply Lemma \ref{lemma:balanced}, using
$A(G_{k}^{\sigma_{k}}) = \pm D_{k}^{-1} A(G_{k}) D_{k}$, to obtain:
\begin{equation}
\bigotimes_{j=1}^{m} \bra{b_{j}} e^{-it A(\Box_{k} \Sigma_{k})} \bigotimes_{\ell=1}^{m} \ket{a_{\ell}}
	= \prod_{k=1}^{m} \bra{b_{k}} D_{k}^{-1} e^{\mp it A(G_{k})} D_{k} \ket{a_{k}} 
	= \pm \prod_{k=1}^{m} \bra{b_{k}} e^{\mp it A(G_{k})} \ket{a_{k}},
\end{equation}
since $D_{k}\ket{a} = \pm\ket{a}$ for each vertex $a$. This proves the claim.
\end{proof}


\section{Signed Joins}

In this section, we consider perfect state transfer properties of binary graph joins 
where the two graphs are given opposite signs. So, we denote $G^{+} \pm H^{-}$
to mean the join of graphs $G$ and $H$ where $G$ is positively signed, $H$ is negatively
signed, and their connecting edges are all positively (or negatively) signed, respectively.
We remark that $G^{+} \pm H^{-}$ are switching equivalent and hence share perfect state
transfer properties.

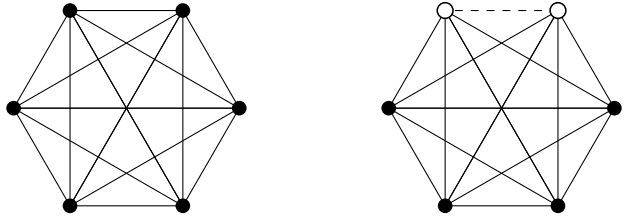
\begin{figure}[t]
\begin{center}
\begin{tikzpicture}[scale=1.5]
\foreach \x in {60,120,...,360} 
    \node at (\x:1)[circle,fill=black][scale=0.5] {};

\foreach \x in {60,120,...,360}
	\foreach \y in {60,120,180}
	{
		\draw (\x:1)--(\x+\y:1);
	}
\end{tikzpicture}	\quad	\quad	\quad	\quad
\begin{tikzpicture}[scale=1.5]
\foreach \x in {60,120,...,360} 
    \node at (\x:1)[circle,fill=black][scale=0.5] {};

\foreach \x in {60,120,...,360}
	\foreach \y in {120,180}
	{
		\draw (\x:1)--(\x+\y:1);
	}

\foreach \x in {120,180,...,360}
	\foreach \y in {60}
	{
		\draw (\x:1)--(\x+\y:1);
	}

\draw[dashed] (60:1)--(120:1);

\foreach \x in {60,120} 
{
	\node at (\x:1)[circle,fill=white][scale=0.5] {};
	\draw[line width=0.2mm] (\x:1) circle (0.07cm);
}
\end{tikzpicture}
\caption{Signed complete graphs. 
(a) $K_{6}$ has no perfect state transfer.
(b) The signed join $K_{2}^{-} + K_{4}^{+}$ has perfect state transfer between vertices marked white.
	The dashed edge is negatively signed.
}
\label{fig:signed-k6}
\end{center}
\hrule
\end{figure}

\begin{theorem} \label{thm:join}
Suppose $G_1$ is a $(n_1,k_1)$-regular graph and $G_2$ is a $(n_2,k_2)$-regular graph.
Let $\SG$ be the signed join graph $G_1^{-} + G_2^{+}$ 
Then, for two vertices $a,b \in V(G_1)$, we have
\begin{equation}
\bra{b}e^{-itA(\SG)}\ket{a} = \bra{b}e^{itA(G_1)}\ket{a} +
	\frac{e^{-it\tdm}}{n_1}
	\left[
	\left(\cos(t\Delta) - i \frac{\tdp}{\Delta} \sin(t\Delta)\right) - e^{-it\tdp}
	\right]
\end{equation}
where $\tdpm = -\frac{1}{2}(k_1 \pm k_2)$ and
$\Delta = \sqrt{\tdp^{2} + n_1 n_2}$.
\end{theorem}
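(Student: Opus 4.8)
The plan is to exploit the invariant-subspace structure of the join coming from the block partition $\{V(G_1),V(G_2)\}$. Writing the adjacency matrix in block form,
\[
A(\SG) = \begin{bmatrix} -A(G_1) & J \\ J & A(G_2) \end{bmatrix},
\]
one observes that, since $G_1$ and $G_2$ are regular, the normalized vectors $\hat e_1 = \tfrac{1}{\sqrt{n_1}}(\jj_{n_1}, 0)$ and $\hat e_2 = \tfrac{1}{\sqrt{n_2}}(0, \jj_{n_2})$ span a two-dimensional $A(\SG)$-invariant subspace $W$, while its orthogonal complement $W^{\perp} = \jj_{n_1}^{\perp} \oplus \jj_{n_2}^{\perp}$ is also invariant, and on $W^{\perp}$ the matrix $A(\SG)$ acts simply as $(-A(G_1)) \oplus A(G_2)$, because the $J$-blocks annihilate anything orthogonal to the all-ones vectors. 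So the first move is: since $a, b \in V(G_1)$, decompose $\ket{a} = \tfrac{1}{\sqrt{n_1}}\hat e_1 + \ket{a^{\perp}}$ and $\ket{b} = \tfrac{1}{\sqrt{n_1}}\hat e_1 + \ket{b^{\perp}}$ with $\ket{a^{\perp}}, \ket{b^{\perp}} \in \jj_{n_1}^{\perp}$; invariance of $W$ and $W^{\perp}$ kills the cross terms, leaving
\[
\bra{b} e^{-itA(\SG)} \ket{a} = \tfrac{1}{n_1}\, \bra{\hat e_1} e^{-itA(\SG)} \ket{\hat e_1} + \bra{b^{\perp}} e^{-itA(\SG)} \ket{a^{\perp}}.
\]

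For the second term I would use that $A(\SG)$ restricted to $W^{\perp}$ equals $-A(G_1)$ there, so it equals $\bra{b^{\perp}} e^{itA(G_1)} \ket{a^{\perp}}$; then, expanding $\ket{a^{\perp}} = \ket{a} - \tfrac{1}{n_1}\ket{\jj_{n_1}}$ (and similarly for $b$) and using that $\jj_{n_1}$ is an eigenvector of $A(G_1)$ with eigenvalue $k_1$, this collapses to $\bra{b} e^{itA(G_1)} \ket{a} - \tfrac{1}{n_1} e^{itk_1}$. For the first term I would compute the matrix of $A(\SG)|_W$ in the orthonormal basis $\{\hat e_1, \hat e_2\}$, namely
\[
M = \begin{bmatrix} -k_1 & \sqrt{n_1 n_2} \\ \sqrt{n_1 n_2} & k_2 \end{bmatrix} = \tdm\, I + N, \qquad N = \begin{bmatrix} \tdp & \sqrt{n_1 n_2} \\ \sqrt{n_1 n_2} & -\tdp \end{bmatrix},
\]
where the identification uses $\tdpm = -\tfrac{1}{2}(k_1 \pm k_2)$. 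Since $N$ is traceless and symmetric with $N^2 = (\tdp^2 + n_1 n_2)\, I = \Delta^2 I$, it behaves like a Pauli matrix: $e^{-itN} = \cos(t\Delta)\, I - i\,\tfrac{\sin(t\Delta)}{\Delta}\, N$, so the $(1,1)$-entry of $e^{-itM} = e^{-it\tdm} e^{-itN}$ is $e^{-it\tdm}\big(\cos(t\Delta) - i\tfrac{\tdp}{\Delta}\sin(t\Delta)\big)$. Substituting both pieces into the displayed identity and rewriting $e^{itk_1}$ as $e^{-it\tdm} e^{-it\tdp}$ — valid because $\tdm + \tdp = -k_1$ — yields exactly the claimed formula.

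All of this is routine linear algebra once the decomposition is in place; the one point that requires care, and the only real pitfall, is that the $2\times 2$ reduction must be carried out in the \emph{orthonormal} basis $\{\hat e_1, \hat e_2\}$ rather than the naive basis $\{(\jj_{n_1}, 0), (0, \jj_{n_2})\}$. In the naive basis $A(\SG)|_W$ is the asymmetric quotient matrix $\begin{bmatrix} -k_1 & n_2 \\ n_1 & k_2 \end{bmatrix}$, which is not what one should exponentiate; the symmetrizing factors $\sqrt{n_1}$ and $\sqrt{n_2}$ are precisely what convert it to the symmetric $M$ above, and in particular are responsible for the $\sqrt{n_1 n_2}$ (hence the $n_1 n_2$ inside $\Delta$) rather than $n_1$ or $n_2$. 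Everything else is bookkeeping with the $k_i$-eigenvector $\jj_{n_i}$ and the definitions of $\tdpm$ and $\Delta$.
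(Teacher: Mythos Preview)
Your proof is correct. The high-level strategy matches the paper's: both split off the two-dimensional subspace spanned by the all-ones vectors $\jj_{n_1}$ and $\jj_{n_2}$ from its orthogonal complement, handle the complement via the eigenvectors of $G_1$ orthogonal to $\jj_{n_1}$ (yielding the $\bra{b}e^{itA(G_1)}\ket{a} - \tfrac{1}{n_1}e^{itk_1}$ term), and reduce the remainder to a $2\times 2$ computation. The difference lies in that $2\times 2$ step. The paper diagonalizes explicitly, computing the eigenvalues $\lambda_\pm = \tdm \pm \Delta$ and the eigenvectors $\ket{x_\pm \jj_{n_1}, \jj_{n_2}}$, and then simplifies the resulting sum $\sum_\pm x_\pm^2 L_\pm^{-1} e^{-it\lambda_\pm}$ via several algebraic identities involving $x_\pm$ and the normalization constants $L_\pm$. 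Your route --- writing $M = \tdm I + N$ with $N$ traceless and $N^2 = \Delta^2 I$, then invoking $e^{-itN} = \cos(t\Delta)I - i\tfrac{\sin(t\Delta)}{\Delta}N$ --- reads off the $(1,1)$-entry directly and bypasses that algebra entirely. Same decomposition, same result, but the Pauli-matrix shortcut is a cleaner execution of the $2\times 2$ piece.
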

\begin{proof}
For notational convenience, given $\ket{u}$ and $\ket{v}$ of dimensions $n_1$ and $n_2$, respectively,
let $\ket{u,v}$ denote the $(n_1 + n_2)$-dimensional ``concatenated'' column vector whose projection
onto the first $n_1$ dimensions is $\ket{u}$ and whose projection onto the last $n_2$ dimensions is $\ket{v}$.

The adjacency matrix of $\SG$ is
\begin{equation}
A(\SG) = 
\begin{bmatrix}
-A(G_1) & J_{n_1,n_2} \\
J_{n_2,n_1} & A(G_2) 
\end{bmatrix}
\end{equation}
If $\ket{\alpha} \neq \ket{\jj_{n_1}}$ is an eigenvector of $A(G_1)$ with eigenvalue $\alpha$, then
$\ket{\alpha,0_{n_2}}$ is an eigenvector of $A(\SG)$ with eigenvalue $-\alpha$.
Similarly,
if $\ket{\beta} \neq \ket{\jj_{n_2}}$ is an eigenvector of $A(G_2)$ with eigenvalue $\beta$, then
$\ket{0_{n_1},\beta}$ is an eigenvector of $A(\SG)$ with eigenvalue $\beta$.
The two remaining eigenvalues of $A(\SG)$ are the solutions of the quadratic equation
\begin{equation}
\lambda^{2} + (k_1 - k_2)\lambda - (k_1 k_2 + n_1 n_2) = 0.
\end{equation}
So, $\lambda_{\pm} = \tdm \pm \Delta$ with the corresponding eigenvectors
$\ket{x_{\pm} \jj_{n_1}, y_{\pm} \jj_{n_2}}$, where the two non-zero 
constants $x$ and $y$ are related through the equations
\begin{equation}
(\lambda + k_1) x = n_2 y, 
\ \ \
(\lambda - k_2) y = n_1 x.
\end{equation}
Letting $y = 1$, we get the normalized eigenvectors
\begin{equation}
\ket{\lambda_{\pm}} = \frac{1}{\sqrt{L_{\pm}}} \ket{x_{\pm} \jj_{n_1}, \jj_{n_2}},
\end{equation}
where $x_{\pm} = (\lambda_{\pm} - k_2)/n_1$ and $L_{\pm} = n_1 x_{\pm}^{2} + n_2$.

Suppose $A(G_1) = \sum_{\alpha} \alpha E_{\alpha}$ is the spectral decomposition of $A(G_1)$.
The quantum walk on $\SG$ from vertex $a$ to $b$ (within the $G_1$ subgraph of $\SG$) is given by
\begin{equation}
\bra{b,0}e^{-itA(\SG)}\ket{a,0} =
	\sum_{\alpha \neq k_1} e^{it\alpha}\bra{b}E_{\alpha}\ket{a} + 
	\sum_{\pm} \frac{x_{\pm}^{2}}{L_{\pm}} e^{-it\lambda_{\pm}}.
\end{equation}
By completing the first term and simplifying the second, this yields
\begin{equation}
\bra{b,0}e^{-itA(\SG)}\ket{a,0} =
	\bra{b}e^{itA(G_1)}\ket{a} - \frac{e^{it k_1}}{n_1} + 
	e^{-it\tdm} \left[ \frac{1}{L_{+}L_{-}} \sum_{\pm} e^{\mp it\Delta} x_{\pm}^{2}L_{\mp}  \right]
\end{equation}
The last term may be simplified further using the following identities:
\begin{eqnarray}
\sum_{\pm} x_{\pm} 	& = & \frac{2\tdp}{n_1}, \\
\prod_{\pm} x_{\pm} & = & -\frac{n_2}{n_1}, \\
\prod_{\pm} L_{\pm} & = & 4\Delta^{2} \ \frac{n_2}{n_1}, \\
x_{\pm}^{2}L_{\mp} 	& = & L_{\pm} \ \frac{n_2}{n_1}.
\end{eqnarray}
Thus, we have
\begin{equation}
\bra{b,0}e^{-itA(\SG)}\ket{a,0} =
	\bra{b}e^{itA(G_1)}\ket{a} - \frac{e^{it k_1}}{n_1} + 
	\frac{e^{-it\tdm}}{n_1} \left(\cos(t\Delta) - i\frac{\tdp}{\Delta}\sin(t\Delta)\right).	
\end{equation}
By combining the last two terms using $k_1 + \tdm = -\tdp$, we obtain the claim.
\end{proof}

\par\noindent
The following are immediate corollaries of Theorem \ref{thm:join}. 

\begin{corollary}
Let $G_{1}$ be a $(n_{1},k_{1})$-regular graph with perfect state transfer 
from vertex $a$ to vertex $b$ at time $t = \pi/D$, for some positive integer $D$.
Let $G_{2}$ be a $(n_{2},k_{2})$-regular graph.
Then, the signed graph $G^{-} + H^{+}$ has perfect state transfer from vertex $a$ to vertex $b$
at time $t = \pi/D$ if one of the following conditions hold:
\begin{itemize}
\item $\Delta \equiv 0\pmod{2D}$ and $k_{1}+k_{2} \equiv 0\pmod{4D}$;
\item $\Delta \equiv D\pmod{2D}$ and $k_{1}+k_{2} \equiv 2D\pmod{4D}$,
\end{itemize}
where $\Delta = \frac{1}{2}\sqrt{(k_{1}+k_{2})^{2} + 4n_{1}n_{2}}$.
\end{corollary}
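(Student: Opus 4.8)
The plan is to read off the corollary directly from the closed form in Theorem~\ref{thm:join}, specialized to $t=\pi/D$. The key observation is that the right-hand side of that formula is $\bra{b}e^{itA(G_1)}\ket{a}$ plus the term $\frac{e^{-it\tdm}}{n_1}\bigl[\bigl(\cos(t\Delta)-i\tfrac{\tdp}{\Delta}\sin(t\Delta)\bigr)-e^{-it\tdp}\bigr]$, so it suffices to force the bracket to be zero: then $\bra{b}e^{-itA(\SG)}\ket{a}=\bra{b}e^{itA(G_1)}\ket{a}$ regardless of the prefactor $e^{-it\tdm}/n_1$. First I would note that, since $A(G_1)$ is a real symmetric matrix and $\ket{a},\ket{b}$ are real standard basis vectors, $\bra{b}e^{itA(G_1)}\ket{a}=\overline{\bra{b}e^{-itA(G_1)}\ket{a}}$, so the hypothesis that $G_1$ has perfect state transfer from $a$ to $b$ at time $\pi/D$ gives $\bigl|\bra{b}e^{i(\pi/D)A(G_1)}\ket{a}\bigr|=1$. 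Hence, once the bracket vanishes at $t=\pi/D$, we get $\bigl|\bra{b}e^{-i(\pi/D)A(\SG)}\ket{a}\bigr|=1$, which is exactly perfect state transfer on $\SG=G_1^{-}+G_2^{+}$ from $a$ to $b$ at time $\pi/D$.

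Next I would translate the two congruence hypotheses into statements about the angles $t\Delta$ and $t\tdp$. Writing $t=\pi/D$ gives $t\Delta=\pi\Delta/D$, and using $\tdp=-\tfrac12(k_1+k_2)$ gives $t\tdp=-\pi(k_1+k_2)/(2D)$. In the first case, $\Delta\equiv 0\pmod{2D}$ makes $\Delta/D$ an even integer, so $\cos(t\Delta)=1$ and $\sin(t\Delta)=0$; and $k_1+k_2\equiv 0\pmod{4D}$ makes $(k_1+k_2)/(2D)$ an even integer, so $e^{-it\tdp}=1$; the bracket is $1-1=0$. In the second case, $\Delta\equiv D\pmod{2D}$ makes $\Delta/D$ odd, so $\cos(t\Delta)=-1$ and $\sin(t\Delta)=0$; and $k_1+k_2\equiv 2D\pmod{4D}$ makes $(k_1+k_2)/(2D)$ odd, so $e^{-it\tdp}=-1$; the bracket is $-1-(-1)=0$. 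Combining with the first paragraph completes the argument in both cases.

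As for difficulty: there is essentially no real obstacle here, since all the analytic content lives in Theorem~\ref{thm:join}; the corollary is just a matter of selecting congruences that simultaneously kill $\sin(t\Delta)$ and match $\cos(t\Delta)$ with $e^{-it\tdp}\in\{\pm1\}$. The only points needing a little care are the sign convention $\tdp=-\tfrac12(k_1+k_2)<0$ (which is why the correct condition on $k_1+k_2$ is divisibility by $4D$, paired with the $2D$ condition on $\Delta$, rather than the reverse), the parity bookkeeping that must couple the two congruences so the signs on $\cos(t\Delta)$ and $e^{-it\tdp}$ agree, and the remark that these conditions are only \emph{sufficient}: one could in principle also cancel the bracket with $\sin(t\Delta)\neq 0$ by solving $\tfrac{\tdp}{\Delta}\sin(t\Delta)=\sin(t\tdp)$ together with $\cos(t\Delta)=\cos(t\tdp)$, but that does not reduce to a clean congruence and so is not pursued.
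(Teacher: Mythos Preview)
Your proposal is correct and is exactly the approach the paper intends: the paper states this result as an ``immediate corollary'' of Theorem~\ref{thm:join} without further proof, and your argument simply fills in the details---showing that each pair of congruences forces the bracketed correction term in Theorem~\ref{thm:join} to vanish at $t=\pi/D$, after which the perfect state transfer on $G_1$ (via the conjugation observation $\bra{b}e^{itA(G_1)}\ket{a}=\overline{\bra{b}e^{-itA(G_1)}\ket{a}}$) finishes the job. Your parity bookkeeping and the handling of the sign in $\tdp=-\tfrac12(k_1+k_2)$ are both correct.
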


\vspace{.1in}
\par\noindent
Another immediate corollary of Theorem \ref{thm:join} is the following curious result which
shows that the perfect state transfer time within a $2$-clique can be made arbitrarily 
small by joining it with a suitably large $3$-regular graph; see Figure \ref{fig:signed-k6}.

\begin{corollary}
Let $G$ be a $3$-regular graph on $n$ vertices.
Then, the signed graph $K_{2}^{-} + G^{+}$ has perfect state transfer between 
the two vertices of $K_{2}$ at time $\pi/\Delta$ where $\Delta = \sqrt{4+2n}$.
\end{corollary}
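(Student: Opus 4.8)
The plan is to apply Theorem~\ref{thm:join} with $G_1 = K_2$ and $G_2 = G$, tracking exactly what the formula gives at the candidate time $t = \pi/\Delta$. First I would record the parameters: $G_1 = K_2$ is $(n_1,k_1)$-regular with $n_1 = 2$ and $k_1 = 1$, while $G_2 = G$ is $(n,3)$-regular, so $n_2 = n$ and $k_2 = 3$. Hence $\tdm = -\tfrac12(k_1 - k_2) = 1$ and $\tdp = -\tfrac12(k_1 + k_2) = -2$, and $\Delta = \sqrt{\tdp^2 + n_1 n_2} = \sqrt{4 + 2n}$, matching the claimed value. For the two vertices $a \neq b$ of $K_2$, the term $\bra{b}e^{itA(K_2)}\ket{a}$ is the $(a,b)$ entry of $e^{it A(K_2)}$; since $A(K_2) = \begin{bmatrix} 0 & 1 \\ 1 & 0 \end{bmatrix}$, this entry is $i\sin t$ (for the standard $2\times 2$ calculation $e^{itX} = \cos(t) I + i\sin(t) X$).

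Next I would substitute into the displayed formula of Theorem~\ref{thm:join}:
\begin{equation}
\bra{b}e^{-itA(\SG)}\ket{a} = i\sin t + \frac{e^{-it}}{2}\left[\left(\cos(t\Delta) - i\frac{-2}{\Delta}\sin(t\Delta)\right) - e^{2it}\right].
\end{equation}
At $t = \pi/\Delta$ we have $t\Delta = \pi$, so $\cos(t\Delta) = -1$ and $\sin(t\Delta) = 0$, and the bracket collapses to $\left[-1 - e^{2it}\right] = -(1 + e^{2it})$. Thus
\begin{equation}
\bra{b}e^{-itA(\SG)}\ket{a} = i\sin t - \frac{e^{-it}}{2}(1 + e^{2it}) = i\sin t - \frac{e^{-it} + e^{it}}{2} = i\sin t - \cos t = -e^{-it}.
\end{equation}
Therefore $\left|\bra{b}e^{-itA(\SG)}\ket{a}\right| = 1$, which is exactly perfect state transfer from $a$ to $b$ at time $t = \pi/\Delta$.

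The argument is essentially a direct substitution, so there is no real obstacle beyond bookkeeping; the one point worth double-checking is the value of $\bra{b}e^{itA(G_1)}\ket{a}$ for $G_1 = K_2$, i.e. confirming it is $i\sin t$ rather than $i\sin t$ times a phase, and confirming that $e^{itA(K_2)}$ (with the $+it$ sign appearing in Theorem~\ref{thm:join}) rather than $e^{-itA(K_2)}$ is what the theorem statement actually produces. I would also note in passing that, since $G^{-} + H^{+}$ and $G^{-} - H^{+}$ are switching equivalent (as remarked before Figure~\ref{fig:signed-k6}), there is no ambiguity in which sign of connecting edges is used. Everything else follows from plugging $\Delta = \sqrt{4+2n}$ into $t = \pi/\Delta$ and simplifying, and one sees immediately that this time shrinks to $0$ as $n \to \infty$, which is the "curious" phenomenon advertised in the introduction.
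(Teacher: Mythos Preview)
Your proof is correct and follows exactly the approach the paper intends: the corollary is stated as an ``immediate'' consequence of Theorem~\ref{thm:join}, and you have carried out the substitution $n_1=2$, $k_1=1$, $n_2=n$, $k_2=3$ (hence $\tdp=-2$, $\tdm=1$, $\Delta=\sqrt{4+2n}$) and simplified at $t=\pi/\Delta$ to obtain $\bra{b}e^{-itA(\SG)}\ket{a}=-e^{-it}$. The paper itself omits these details entirely, so your write-up is in fact more explicit than the original.
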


\vspace{.1in}
\par\noindent{\em Remark}:
The above result does not hold on the unsigned join $K_{2} + G$ if $G$ is a $(n,3)$-regular graph.
For a $(n,k)$-regular graph $G$, Angeles-Canul \etal \cite{anoprt10} proved that $K_{2} + G$ has
perfect state transfer between the vertices of the $2$-clique provided
$\Delta = \sqrt{(k-1)^{2}+8n}$ is an integer and that both $k-1$ and $\Delta$ are divisible by $8$.
The last condition is clearly impossible when $k=3$.


\section{Decomposition}

In this section, we exploit the fact that the adjacency matrix of a signed graphs may be decomposed 
into positive and negative parts. This decomposition defines the positive and negative subgraphs of 
a signed graph. We describe some results on perfect state transfer on signed graphs
under certain assumptions on these subgraphs.

\begin{theorem} \label{thm:decomposition}
Let $G=(V,E)$ be a graph with perfect state transfer from vertex $a$ to vertex $b$ at time $t$.
Suppose $H$ is a spanning subgraph of $\overline{G}$ and that $H$ is periodic at vertex $a$ with time $t$.
Then, the signed graph $\SG = G^{+} \cup H^{-}$ has perfect state transfer from $a$ to $b$ at time $t$
provided $A(G)$ and $A(H)$ commute.
\end{theorem}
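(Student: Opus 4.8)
The plan is to factor the walk operator of $\SG$ and then read off the conclusion from the two hypotheses. First I would note that, since $H$ is a spanning subgraph of $\overline{G}$, the graphs $G$ and $H$ are edge-disjoint on the common vertex set $V$, so $\SG = G^{+}\cup H^{-}$ is a well-defined signed graph and, by the decomposition convention of the Preliminaries, $A(\SG) = A(G) - A(H)$. Because $A(G)$ and $A(H)$ commute by hypothesis (and hence so do $-itA(G)$ and $itA(H)$), the matrix exponential splits:
\begin{equation}
e^{-itA(\SG)} = e^{-it\left(A(G) - A(H)\right)} = e^{-itA(G)}\, e^{itA(H)}.
\end{equation}
This is the only step that uses commutativity.

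Next I would translate the two assumptions into statements about the action on $\ket{a}$. Perfect state transfer from $a$ to $b$ at time $t$ in $G$ means $|\bra{b}e^{-itA(G)}\ket{a}| = 1$; since $e^{-itA(G)}$ is unitary (as $A(G)$ is real symmetric) it preserves norms, so by Cauchy--Schwarz this forces $e^{-itA(G)}\ket{a} = \gamma\,\ket{b}$ for some scalar $\gamma$ with $|\gamma| = 1$. Likewise, periodicity of $H$ at $a$ with time $t$ means $|\bra{a}e^{-itA(H)}\ket{a}| = 1$, which forces $e^{-itA(H)}\ket{a} = \mu\,\ket{a}$, equivalently $e^{itA(H)}\ket{a} = \overline{\mu}\,\ket{a}$, for some unit scalar $\mu$. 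Combining these with the factorization,
\begin{equation}
e^{-itA(\SG)}\ket{a} = e^{-itA(G)}\, e^{itA(H)}\ket{a} = \overline{\mu}\, e^{-itA(G)}\ket{a} = \overline{\mu}\,\gamma\,\ket{b},
\end{equation}
so $|\bra{b}e^{-itA(\SG)}\ket{a}| = |\overline{\mu}\,\gamma| = 1$, which is exactly perfect state transfer from $a$ to $b$ at time $t$ in $\SG$.

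There is no real obstacle here; once the operator is factored the argument is a two-line computation. The only points deserving a word of care are the standard fact that $|\bra{v}U\ket{u}| = 1$ for a unitary $U$ implies $U\ket{u}$ is a unit multiple of $\ket{v}$, and the observation that only periodicity of $H$ \emph{at the source vertex} $a$ is needed (not at $b$), because $e^{itA(H)}$ acts on $\ket{a}$ \emph{before} $e^{-itA(G)}$ transports the state to $\ket{b}$; this is why the hypothesis can be stated with periodicity at $a$ alone.
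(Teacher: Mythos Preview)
Your proof is correct and follows essentially the same route as the paper: factor $e^{-itA(\SG)}=e^{-itA(G)}e^{itA(H)}$ using commutativity and $A(\SG)=A(G)-A(H)$, then apply periodicity of $H$ at $a$ to reduce to the perfect state transfer of $G$. Your added justification via Cauchy--Schwarz for why $|\bra{v}U\ket{u}|=1$ forces $U\ket{u}$ to be a unit multiple of $\ket{v}$ is a welcome detail that the paper leaves implicit.
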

\begin{proof}
Suppose $G$ has perfect state transfer from vertex $a$ to vertex $b$ at time $t$
and $H$ is periodic at vertex $a$ with time $t$, where
$e^{-itA(H)}\ket{a} = e^{i\phi}\ket{a}$ for some real number $\phi$.
The adjacency matrix of $\SG = G^{+} \cup H^{-}$ is given by $A(\SG) = A(G) - A(H)$.
Then, the quantum walk on $\SG$ is
\begin{equation}
\bra{b}e^{-itA(\SG)}\ket{a} = \bra{b}e^{-itA(G)}e^{itA(H)}\ket{a} = e^{i\phi}\bra{b}e^{-itA(G)}\ket{a}.
\end{equation}
This proves the claim.
\end{proof}

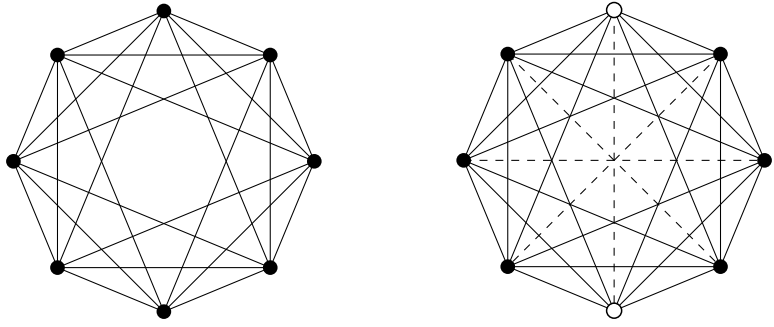
\begin{figure}[t]
\begin{center}
\begin{tikzpicture}[scale=2.0]
\foreach \x in {45,90,...,360}
    \node at (\x:1)[circle,fill=black][scale=0.5] {};

\foreach \x in {45,90,...,360}
    \foreach \y in {45,90,135}
    {
        \draw (\x:1)--(\x+\y:1);
    }
\end{tikzpicture}	\quad	\quad	\quad	\quad
\begin{tikzpicture}[scale=2.0]
\foreach \x in {45,90,...,360}
    \node at (\x:1)[circle,fill=black][scale=0.5] {};

\foreach \x in {45,90,...,360}
    \foreach \y in {45,90,135}
    {
        \draw (\x:1)--(\x+\y:1);
    }

\foreach \x in {45,90,...,180}
{
	\draw[dashed] (\x:1)--(\x+180:1);
}

\foreach \x in {90,270}
{
    \node at (\x:1)[circle,fill=white][scale=0.5] {};
    \draw[line width=0.2mm] (\x:1) circle (0.05cm);
}
\end{tikzpicture}
\caption{More on signed complete graphs. 
(a) $K_{8}$ has no perfect state transfer.
(b) The signed complete graph $K_{8}^{\pm}$ has perfect state transfer between vertices marked white.
	Dashed edges are negatively signed.
}
\label{fig:signed-k8}
\end{center}
\hrule
\end{figure}

\vspace{.1in}
\par\noindent
In the following result, we apply Theorem \ref{thm:decomposition} to signed complete graphs.

\begin{corollary} \label{cor:srg}
Let $G$ be a 
regular graph which has perfect state transfer from vertex $a$ to vertex $b$ at time $t$.
Suppose that $\overline{G}$ is periodic at vertex $a$ at time $t$.
Then, the signed complete graph $\SG = G^{+} \cup \overline{G}^{-}$
has perfect state transfer from vertex $a$ to vertex $b$ at time $t$.
\end{corollary}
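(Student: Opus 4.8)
The statement is set up precisely so that it should fall out of Theorem \ref{thm:decomposition} applied with the spanning subgraph $H = \overline{G}$ of $\overline{G}$ (trivially a spanning subgraph of itself). The three hypotheses of Theorem \ref{thm:decomposition} are: (i) $G$ has perfect state transfer from $a$ to $b$ at time $t$; (ii) $\overline{G}$ is periodic at $a$ at time $t$; and (iii) $A(G)$ and $A(\overline{G})$ commute. Hypotheses (i) and (ii) are exactly the assumptions of the corollary, so the only thing to check is (iii), and this is where the regularity of $G$ enters.

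\textbf{Key step (commutativity).} Since $G$ is $k$-regular on $n$ vertices, its complement satisfies $A(\overline{G}) = J - I - A(G)$, where $J$ is the $n \times n$ all-ones matrix. I would then compute
\begin{equation}
A(G)A(\overline{G}) = A(G)(J - I - A(G)) = kJ - A(G) - A(G)^{2},
\end{equation}
using $A(G)J = kJ$ (again by $k$-regularity), and likewise
\begin{equation}
A(\overline{G})A(G) = (J - I - A(G))A(G) = kJ - A(G) - A(G)^{2},
\end{equation}
using $JA(G) = kJ$. Hence $A(G)$ and $A(\overline{G})$ commute.

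\textbf{Conclusion.} Note that $G \cup \overline{G} = K_{n}$, so the signed graph $\SG = G^{+} \cup \overline{G}^{-}$ is indeed a signed complete graph. Applying Theorem \ref{thm:decomposition} with $H = \overline{G}$ — legitimate by (i), (ii), and the commutativity just established — yields perfect state transfer in $\SG$ from $a$ to $b$ at time $t$, which is the claim.

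\textbf{Expected obstacle.} There is essentially no hard step here: the corollary is a straightforward specialization of Theorem \ref{thm:decomposition}, and the only non-immediate point is the commutativity of $A(G)$ with $A(\overline{G})$, which is a short computation that uses regularity in an essential way (it fails for non-regular $G$, since then $A(G)J \ne JA(G)$ in general). So the "main obstacle" is really just remembering to invoke regularity at that point, and to observe $G \cup \overline{G} = K_n$ so that the resulting signed graph is complete.
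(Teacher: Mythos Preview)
Your proposal is correct and follows essentially the same approach as the paper: both arguments take $H=\overline{G}$, use regularity (via $A(G)J=JA(G)=kJ$) together with $A(\overline{G})=J-I-A(G)$ to get the commutativity of $A(G)$ and $A(\overline{G})$, and then invoke Theorem~\ref{thm:decomposition}. The paper's write-up is simply a terser version of your computation.
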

\begin{proof}
Since $G$ is regular, $A(G)$ commutes with $J$.
The adjacency matrix of $\overline{G}$ is $A(\overline{G}) = J-I-A(G)$ 
which clearly commutes with $A(G)$. Thus, we may apply Theorem \ref{thm:decomposition}.
\end{proof}

\vspace{.1in}
\par\noindent{\em Remark}:
Consider the $n$-partite graph $G = K_{2,2,\ldots,2}$ 
(also known as the cocktail party graph)
which has antipodal perfect state transfer between vertices in the same partition at time $\pi/2$
(see Ba\v{s}i\'{c} and Petkovic \cite{bp09} and Angeles-Canul \etal \cite{anoprt10}).
If we view this graph as a circulant with $N=2n$ vertices, then let $P_{\sigma}$ denote 
the permutation $\sigma: x \mapsto x+N/2\pmod{N}$. Note that $\sigma$ is an automorphism of $G$.
By Corollary \ref{cor:srg}, we have that the signed complete graph 
\begin{equation}
K_{N}^{\pm} = G^{+} \cup P_{\sigma}^{-}
\end{equation}
has perfect state transfer between vertices $x$ and $x+N/2$ at time $\pi/2$.
Recall that no {\em unsigned} complete graph $K_{N}$ 
has perfect state transfer for $N \ge 3$; see Figure \ref{fig:signed-k8}.
\vspace{.1in}

\par\noindent
Next, we use some useful constructions of perfect state transfer and periodic graphs
from {\em cubelike} graphs which are Cayley graphs over the abelian group $\ZZ_{2}^{d}$.

\begin{theorem} (Cheung and Godsil \cite{cg11}) \\
Let $C \subseteq \ZZ_{2}^{d}$ and let $\delta$ be the sum of the elements of $C$.
If $\delta \neq 0$, then the Cayley graph $G = X(\ZZ_{2}^{d},C)$ has perfect state transfer from $u$ to $u + \delta$
at time $\pi/2$, for each $u \in \ZZ_{2}^{d}$. If $\delta = 0$, then $G$ is periodic with period $\pi/2$.
\end{theorem}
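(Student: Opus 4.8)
The plan is to exploit the well-known Fourier-diagonalization of Cayley graphs over $\ZZ_2^d$, where every such graph is simultaneously diagonalized by the character matrix and the eigenvalues have a clean combinatorial description. First I would recall that the characters of $\ZZ_2^d$ are $\chi_a(x) = (-1)^{a \cdot x}$ for $a \in \ZZ_2^d$, and that $A(X(\ZZ_2^d,C))$ has eigenvalue $\lambda_a = \sum_{c \in C} (-1)^{a \cdot c}$ on the character vector $\ket{\chi_a}$. Since all these eigenvectors are common to every cubelike graph, the transfer amplitude $\bra{v}U(t)\ket{u}$ can be written as $\frac{1}{2^d}\sum_a e^{-it\lambda_a}(-1)^{a \cdot (u+v)}$.

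The key step is a parity observation about the exponents $e^{-it\lambda_a}$ at $t = \pi/2$. Write $|C| = \sum_{c} 1$ and note $\lambda_a = |C| - 2\,|\{c \in C : a\cdot c = 1\}|$, so $\lambda_a \equiv |C| \pmod 2$; in particular all $\lambda_a$ share the same parity. At $t = \pi/2$ we have $e^{-i(\pi/2)\lambda_a} = (-i)^{\lambda_a}$, which depends only on $\lambda_a \bmod 4$. I would then compute $\lambda_a \bmod 4$ in terms of $a \cdot \delta$, where $\delta = \sum_{c \in C} c$. The crucial identity is that $|\{c : a \cdot c = 1\}| \equiv a \cdot \delta \pmod 2$ (each coordinate where $a$ is $1$ contributes, and summing mod $2$ over $C$ gives $a \cdot \delta$); more carefully one shows $\lambda_a \equiv \lambda_0 - 2(a\cdot\delta) \pmod 4$ is not quite right because of carries, so the honest route is: fix $a$, set $C_a = \{c \in C : a \cdot c = 1\}$, so $\lambda_a = |C| - 2|C_a|$ and $|C_a| \equiv a\cdot\delta \pmod 2$. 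Hence $(-i)^{\lambda_a} = (-i)^{|C|}(-1)^{|C_a|} = (-i)^{|C|}(-1)^{a \cdot \delta}$, using $(-i)^{-2|C_a|} = (-1)^{|C_a|}$. Substituting into the amplitude gives
\begin{equation}
\bra{v}U(\pi/2)\ket{u} = \frac{(-i)^{|C|}}{2^d}\sum_{a \in \ZZ_2^d} (-1)^{a\cdot\delta}(-1)^{a\cdot(u+v)} = (-i)^{|C|}\,\iverson{u + v = \delta}.
\end{equation}
This single computation yields both halves of the theorem at once: if $\delta \neq 0$, the amplitude has modulus $1$ exactly when $v = u + \delta$, giving perfect state transfer from $u$ to $u+\delta$; if $\delta = 0$, it has modulus $1$ exactly when $v = u$, giving periodicity at every vertex with period (dividing) $\pi/2$.

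The main obstacle is purely bookkeeping: getting the modular arithmetic with $(-i)^{\lambda_a}$ exactly right, in particular the step $(-i)^{|C| - 2|C_a|} = (-i)^{|C|} (-1)^{|C_a|}$ and the parity identity $|C_a| \equiv a \cdot \delta \pmod 2$. Everything else — the orthogonality relation $\sum_a (-1)^{a\cdot w} = 2^d \iverson{w = 0}$ and the shared eigenbasis — is standard. Since this is precisely the Cheung–Godsil result, I would cite \cite{cg11} for the proof and merely indicate this Fourier-analytic argument as the mechanism.
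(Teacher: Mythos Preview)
Your argument is correct, and in fact it is essentially the original Bernasconi--Godsil--Severini/Cheung--Godsil computation. Note, however, that the present paper does not give its own proof of this theorem at all: it is stated with attribution to \cite{cg11} and used as a black box in the subsequent remarks, so there is nothing in the paper to compare your proposal against beyond the citation itself. Your plan to cite \cite{cg11} and sketch the Fourier mechanism is exactly what is appropriate here.
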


\par\noindent{\em Remark}:
Let $G = X(\ZZ_{2}^{d},C)$ be a Cayley graph where $\delta$, the sum of elements of $C$, is nonzero.
Suppose $P_{\delta}$ is the $2^{d} \times 2^{d}$ permutation matrix representing the
bijection $x \mapsto x+\delta$. Then, by Theorem \ref{thm:decomposition}, the signed cubelike graph 
\begin{equation}
\Sigma = G^{+} \cup P_{\delta}^{-}
\end{equation}
has perfect state transfer from $u$ to $u + \delta$ at time $\pi/2$, for each $u \in \ZZ_{2}^{d}$.
But, the {\em unsigned} cubelike graph 
\begin{equation}
\GG = G^{+} \cup P_{\delta}^{+}
\end{equation}
is merely periodic with period $\pi/2$; see Figure \ref{fig:cubelike-k8}.

\begin{figure}[t]
\begin{center}
\begin{tikzpicture}[scale=1.5]
\foreach \x in {113,158,...,360}
    \node at (\x:1)[circle,fill=black][scale=0.5] {};

\foreach \x in {23,113,...,360}
    {
        \draw (\x:1)--(\x-45:1);
    }

\foreach \x in {68,113,248,293}
	{
		\draw (\x:1)--(\x+90:1);
	}

\foreach \x in {23,68,...,360}
	{
		\draw (\x:1)--(\x+180:1);
	}

\foreach \z in {0,1,2,3}
	{
		\draw[dashed] (68 + 45*\z:1)--(23 - 45*\z:1);
	}

\foreach \x in {23,68}
{
    \node at (\x:1)[circle,fill=white][scale=0.5] {};
    \draw[line width=0.2mm] (\x:1) circle (0.07cm);
}

\node at (68:1.35)  {$000$};
\node at (113:1.35) {$001$};
\node at (158:1.35) {$010$};
\node at (203:1.35) {$011$};
\node at (248:1.35) {$100$};
\node at (293:1.35) {$101$};
\node at (338:1.35) {$110$};
\node at (383:1.35) {$111$};
\end{tikzpicture} 	\quad	\quad	\quad	\quad
\begin{tikzpicture}[scale=1.5]
\foreach \x in {23,68,...,360}
    \node at (\x:1)[circle,fill=black][scale=0.5] {};

\foreach \x in {23,113,...,360}
    {
        \draw (\x:1)--(\x-45:1);
    }

\foreach \x in {68,113,248,293}
	{
		\draw (\x:1)--(\x+90:1);
	}

\foreach \x in {23,68,...,360}
	{
		\draw (\x:1)--(\x+180:1);
	}

\foreach \z in {0,1,2,3}
	{
		\draw (68 + 45*\z:1)--(23 - 45*\z:1);
	}

\node at (68:1.35)  {$000$};
\node at (113:1.35) {$001$};
\node at (158:1.35) {$010$};
\node at (203:1.35) {$011$};
\node at (248:1.35) {$100$};
\node at (293:1.35) {$101$};
\node at (338:1.35) {$110$};
\node at (383:1.35) {$111$};
\end{tikzpicture}
\caption{(a) The cubelike graph $Q_{3}^{+} \cup P_{111}^{-}$ has perfect state transfer
between vertices marked white; dashed edges are negatively signed.
(b) The (unsigned) cubelike graph $Q_{3} \cup P_{111}$ is only periodic.
Here, $P_{111}$ represents the automorphism $x \mapsto x+111$.
}
\label{fig:cubelike-k8}
\end{center}
\hrule
\end{figure}
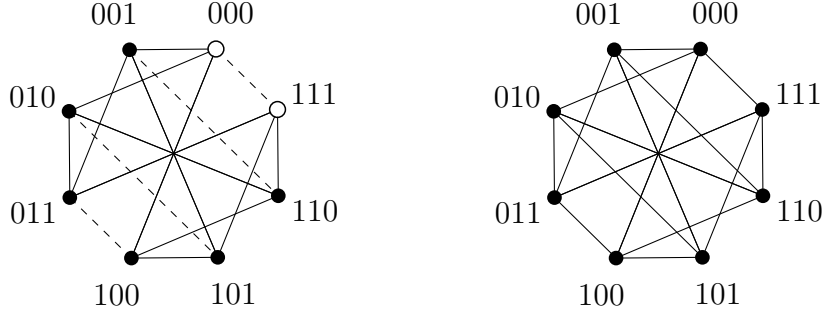


\paragraph{Double Covers}
Given a signed graph $\SG = (G,\sigma)$ where $G=(V,E)$, its {\em double cover} 
is an unsigned graph $\GG$ with vertex set $V \times \{0,1\}$ where 
for each $(u,v) \in E$ and $b \in \{0,1\}$:
\begin{itemize}
\item $(u,b)$ is adjacent to $(v,b)$ if $\sigma(u,v) = +1$; and
\item $(u,b)$ is adjacent to $(v,1-b)$ if $\sigma(u,v) = -1$.
\end{itemize}
Alternatively, if the signed graph $\SG$ has the decomposition $G^{+} \cup G^{-}$, 
its double cover $\GG$ is a graph whose adjacency matrix is
\begin{equation}
A(\GG) = A(G^{+}) \otimes I + A(G^{-}) \otimes X.
\end{equation}

\begin{theorem} \label{thm:double-cover}
Let $\SG = G^{+} \cup G^{-}$ be a signed graph where $A(G^{+})$ and $A(G^{-})$ commute.
Suppose that $G^{+}$ has perfect state transfer from vertex $a$ to vertex $b$ at time $t$ 
and that $G^{-}$ satisfies
\begin{equation}
\bra{b}\cos(A(G^{-})t)\ket{b} = \pm 1.
\end{equation}
Then, the double cover of $\SG$ has perfect state transfer from $(a,1)$ to $(b,1)$ at time $t$.
\end{theorem}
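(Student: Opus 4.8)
The plan is to exploit the tensor‑product structure of the double cover's adjacency matrix together with the commutativity hypothesis, exactly in the spirit of Theorem~\ref{thm:decomposition}. Writing $A(\GG) = A(G^{+}) \otimes I + A(G^{-}) \otimes X$, the two summands commute: $A(G^{+})$ and $A(G^{-})$ commute by hypothesis, and $I$ commutes with $X$. Hence $e^{-itA(\GG)} = \left(e^{-itA(G^{+})} \otimes I\right)\left(e^{-itA(G^{-})} \otimes X}\right)$, wait — more carefully, since $A(G^{-})\otimes X$ is itself a tensor product of commuting-in-the-sense-of-Kronecker factors, $e^{-it\,A(G^{-})\otimes X}$ need not factor as a Kronecker product; instead I would diagonalize $X = H\,\diag(1,-1)\,H$ (with $H$ the Hadamard) and work in the $\pm 1$ eigenbasis of the second tensor slot. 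In the $+1$ eigenspace of $X$ the walk acts as $e^{-it(A(G^{+})+A(G^{-}))}$ and in the $-1$ eigenspace as $e^{-it(A(G^{+})-A(G^{-}))}$.

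Concretely, I would expand the amplitude $\bra{(b,1)}e^{-itA(\GG)}\ket{(a,1)}$ by writing $\ket{1} = \tfrac{1}{\sqrt 2}(\ket{+}-\ket{-})$ in the Hadamard basis of the $\{0,1\}$ factor, where $X\ket{\pm} = \pm\ket{\pm}$. This gives
\begin{equation}
\bra{(b,1)}e^{-itA(\GG)}\ket{(a,1)} = \tfrac12\bra{b}e^{-it(A(G^{+})+A(G^{-}))}\ket{a} + \tfrac12\bra{b}e^{-it(A(G^{+})-A(G^{-}))}\ket{a}.
\end{equation}
Since $A(G^{+})$ and $A(G^{-})$ commute, $e^{-it(A(G^{+})\pm A(G^{-}))} = e^{-itA(G^{+})}e^{\mp itA(G^{-})}$, so averaging the two terms collapses the odd part of $e^{\mp itA(G^{-})}$ and leaves
\begin{equation}
\bra{(b,1)}e^{-itA(\GG)}\ket{(a,1)} = \bra{b}e^{-itA(G^{+})}\cos(tA(G^{-}))\ket{a}.
\end{equation}

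Now I would use the two remaining hypotheses. Perfect state transfer of $G^{+}$ from $a$ to $b$ at time $t$ means $e^{-itA(G^{+})}\ket{a} = \gamma\ket{b}$ for some unit scalar $\gamma$; since $A(G^{+})$ and $A(G^{-})$ commute, $\cos(tA(G^{-}))$ commutes with $e^{-itA(G^{+})}$, so the amplitude equals $\gamma\,\bra{b}\cos(tA(G^{-}))\ket{b}$, which has modulus $1$ precisely because $\bra{b}\cos(tA(G^{-}))\ket{b} = \pm 1$. The one point needing a little care — and the only place I expect any subtlety — is justifying that $\cos(tA(G^{-}))$ may be moved past $e^{-itA(G^{+})}$ and that $e^{-itA(G^{+})}\ket{a}=\gamma\ket{b}$ forces $\cos(tA(G^{-}))$ to act on $\ket{a}$ and $\ket{b}$ compatibly; this follows since the two operators are simultaneously diagonalizable, and in fact $\bra{b}\cos(tA(G^{-}))\ket{b}=\pm1$ forces $\cos(tA(G^{-}))\ket{b}=\pm\ket{b}$ (a real symmetric contraction attaining $\pm1$ in a diagonal entry must fix that basis vector up to sign), so everything is consistent. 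That gives $|\bra{(b,1)}e^{-itA(\GG)}\ket{(a,1)}| = 1$, as claimed.
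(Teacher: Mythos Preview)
Your proof is correct and follows essentially the same approach as the paper: factor $e^{-itA(\GG)}$ via the commutativity of the two summands, reduce the amplitude to the $G^{+}$ transfer amplitude times $\bra{b}\cos(tA(G^{-}))\ket{b}$, and invoke the hypotheses. The only cosmetic difference is that the paper expands $e^{-itA(G^{-})\otimes X} = \cos(tA(G^{-}))\otimes I - i\sin(tA(G^{-}))\otimes X$ directly (using $X^{2}=I$ and $\bra{1}X\ket{1}=0$) and applies PST first, whereas you diagonalize $X$ in the Hadamard basis and apply PST after commuting; your closing remark that $\cos(tA(G^{-}))\ket{b}=\pm\ket{b}$ is correct but unnecessary for the argument.
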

\begin{proof}
Let $\GG$ be the double cover of $\SG$.
Assume that $G^{+}$ has perfect state transfer from vertex $a$ to vertex $b$ at time $t$.
More specifically, suppose that for a real number $\phi$, we have
\begin{equation}
e^{-itA(G^{+})}\ket{a} = e^{i\phi}\ket{b}
\end{equation}
Therefore, the quantum walk on $\GG$ from vertex $a$ to vertex $b$ is given by
\begin{eqnarray}
\bra{b,1}e^{-itA(\GG)}\ket{a,1} 
	& = & \bra{b,1} e^{-it A(G^{-}) \otimes X} e^{-it A(G^{+}) \otimes I} \ket{a,1} \\
	& = & e^{i\phi} \bra{b,1} e^{-it A(G^{-}) \otimes X} \ket{b,1} \\
	& = & e^{i\phi} \bra{b,1} \left[\cos(tA(G^{-})) \otimes I - i \sin(tA(G^{-})) \otimes X \right] \ket{b,1} \\
	& = & e^{i\phi} \bra{b} \cos(tA(G^{-})) \ket{b}.
\end{eqnarray}
This yields the claim.
\end{proof}

\vspace{.1in}
\par\noindent{\em Remark}:
Bernasconi \etal \cite{bgs08} showed that the cubelike graph $G_{1} = X(\ZZ_{2}^{d},C_{1})$ 
has perfect state transfer from $u$ to $u+\delta_{1}$ at time $\pi/2$, where $\delta_{1}$ is 
the sum of all elements of $C_{1}$, provided $\delta_{1} \neq 0$.
When the sum of the generating elements is zero, Cheung and Godsil \cite{cg11} 
proved that there are cubelike graphs $G_{2} = X(\ZZ_{2}^{d},C_{2})$ that have 
perfect state transfer at time $\pi/4$. 
They showed that the latter cubelike graphs correspond to self-orthogonal projective 
binary codes that are even but not doubly even. 
Moreover, these cubelike graphs $G_{2}$ are periodic at time $\pi/2$ and satisfy 
\begin{equation}
\bra{u}\exp(-itA(G_{2}))\ket{u} = e^{-i\pi|C_{2}|/2}.
\end{equation}
See Lemma 3.1 in \cite{cg11} and the comments which followed it.
So, $G_{2}$ is periodic with period $\pm 1$ provided $|C_{2}|$ is even.
Recall that the adjacency matrices of any two cubelike graphs commute since 
they share the same set of eigenvectors (namely, the columns of the Hadamard matrices).
By Theorem \ref{thm:double-cover}, the {\em double cover} of the signed (multi)graph 
\begin{equation}
\Sigma = G_{1}^{+} \cup G_{2}^{-} 
\end{equation}
has perfect state transfer at time $\pi/2$ from $u$ to $u+\delta_{1}$ for each $u \in \ZZ_{2}^{d}$.


\section{Signed Quotients}

Given a graph $G=(V,E)$, a vertex partition $\pi$ given by $V = \biguplus_{k=1}^{m} V_{k}$
is called an {\em equitable partition} of $G$ if for each $j,k$ there are constants $d_{j,k}$ so that
the number of neighbors in $V_{k}$ of each vertex in $V_{j}$ is $d_{j,k}$; this is independent of the 
choice of the vertex of $V_{j}$. That is, for each $x \in V_{j}$ we have
\begin{equation}
d_{j,k} = |N(x) \cap V_{k}|,
\end{equation}
where $N(x) = \{y : (x,y) \in E\}$ is the set of neighbors of $x$.
When the context is clear, we use $\pi_{j}$ in place of $V_{j}$ and
$\pi(u)$ to denote the partition which contains vertex $u$.
The size of the equitable partition is denoted $|\pi|=m$.
Let $P_{\pi}$ be the partition matrix of $\pi$ defined as $\bra{x}P_{\pi}\ket{\pi_{k}} = \iverson{x \in \pi_{k}}$.
It is more useful to work with the normalized partition matrix $Q_{\pi}$ defined as:
\begin{equation}
Q_{\pi} = \sum_{k=1}^{m} \frac{1}{\sqrt{|\pi_{k}|}} P_{\pi}\ketbra{\pi_{k}}{\pi_{k}}.
\end{equation}
For a signed graph $\SG=G^{+} \cup G^{-}$, we say a vertex partition $\pi = \biguplus_{k=1}^{m} \pi_{k}$ 
is equitable for $\SG$ if $\pi$ is an equitable partition for both $G^{+}$ and $G^{-}$.
We use $d_{j,k}^{+}$ and $d_{j,k}^{-}$ to denote $d_{j,k}$ restricted to $G^{+}$ and $G^{-}$, respectively.
Also, we let $d_{j,k}^{\pm} = d_{j,k}^{+} - d_{j,k}^{-}$.
Let $A(\SG/\pi)$ be a symmetric $m \times m$ matrix with rows and columns indexed by the partitions of $\pi$
and whose entries are defined by
\begin{equation}
\bra{\pi_{j}}A(\SG/\pi)\ket{\pi_{k}} = \frac{d^{\pm}_{j,k}}{|d^{\pm}_{j,k}|} \sqrt{|d^{\pm}_{j,k} d^{\pm}_{k,j}|}.
\end{equation}
Here, $A(\SG/\pi)$ is the adjacency matrix of a weighted signed graph $\SG/\pi$ 
(which we call the quotient of $\SG$ modulo $\pi$). 

\vspace{.1in}
\par\noindent
The next lemma generalizes a result for unsigned graphs (see Godsil \cite{godsil-dm11}).

\begin{lemma} \label{lemma:equitable}
Let $\Sigma = G^{+} \cup G^{-}$ be a signed graph and $\pi$ be an equitable partition of $\Sigma$ 
with a normalized partition matrix $Q_{\pi}$. Then:
\begin{enumerate}
\item $Q_{\pi}^{T}Q_{\pi} = I_{|\pi|}$.
\item $Q_{\pi}Q_{\pi}^{T} = \diag(|\pi_{k}|^{-1} J_{|\pi_{k}|})$.
\item $Q_{\pi}Q_{\pi}^{T}$ commutes with $A(\SG)$.
\item $A(\SG/\pi) = Q_{\pi}^{T} A(\SG) Q_{\pi}$.
\end{enumerate}
\end{lemma}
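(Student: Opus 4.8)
The plan is to reduce all four parts to the single matrix identity $A(\SG)\,P_{\pi} = P_{\pi}B$, where $B$ is the $m\times m$ matrix with entries $B_{j,k} = d_{j,k}^{\pm}$, and then to transfer it to the normalized partition matrix. Write $D = \diag(|\pi_1|,\ldots,|\pi_m|)$, so that $Q_{\pi} = P_{\pi}D^{-1/2}$ and column $k$ of $Q_{\pi}$ is $|\pi_k|^{-1/2}\mathbf{1}_{\pi_k}$, where $\mathbf{1}_{\pi_k}$ denotes the characteristic vector of the cell $\pi_k$ (i.e.\ column $k$ of $P_{\pi}$). Parts (1) and (2) are then immediate: the cells are pairwise disjoint, so the vectors $\mathbf{1}_{\pi_k}$ are pairwise orthogonal and $|\pi_k|^{-1/2}\mathbf{1}_{\pi_k}$ is a unit vector, giving $Q_{\pi}^{T}Q_{\pi} = I_{|\pi|}$; and $Q_{\pi}Q_{\pi}^{T} = \sum_k |\pi_k|^{-1}\mathbf{1}_{\pi_k}\mathbf{1}_{\pi_k}^{T}$ has $(x,y)$ entry $|\pi_k|^{-1}$ when $x$ and $y$ share a cell $\pi_k$ and $0$ otherwise, which is the block-diagonal matrix $\diag(|\pi_k|^{-1}J_{|\pi_k|})$ once the vertices are grouped cell by cell.

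The heart of the argument is the identity $A(\SG)P_{\pi} = P_{\pi}B$. Since $\pi$ is equitable for $G^{+}$, for $x\in\pi_j$ the $x$-coordinate of $A(G^{+})\mathbf{1}_{\pi_k}$ equals $|N(x)\cap\pi_k|$ restricted to $G^{+}$, namely $d_{j,k}^{+}$, independently of $x$; hence $A(G^{+})\mathbf{1}_{\pi_k} = \sum_j d_{j,k}^{+}\mathbf{1}_{\pi_j}$, and likewise $A(G^{-})\mathbf{1}_{\pi_k} = \sum_j d_{j,k}^{-}\mathbf{1}_{\pi_j}$. Subtracting and using $A(\SG) = A(G^{+}) - A(G^{-})$ gives $A(\SG)\mathbf{1}_{\pi_k} = \sum_j d_{j,k}^{\pm}\mathbf{1}_{\pi_j}$, i.e.\ $A(\SG)P_{\pi} = P_{\pi}B$. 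Conjugating by $D^{1/2}$ turns this into $A(\SG)Q_{\pi} = Q_{\pi}\hat{B}$ with $\hat{B} = D^{1/2}BD^{-1/2}$, so $\hat{B}_{j,k} = \sqrt{|\pi_j|/|\pi_k|}\; d_{j,k}^{\pm}$. For (4), part (1) gives $Q_{\pi}^{T}A(\SG)Q_{\pi} = Q_{\pi}^{T}Q_{\pi}\hat{B} = \hat{B}$, and it remains to identify $\hat{B}$ with $A(\SG/\pi)$. Here I would invoke the edge double-counting identity $|\pi_j|\,d_{j,k}^{+} = |\pi_k|\,d_{k,j}^{+}$ (both sides count the ordered pairs $(x,y)$ with $x\in\pi_j$, $y\in\pi_k$, $xy$ an edge of $G^{+}$), together with the same identity for $G^{-}$; subtracting yields $|\pi_j|\,d_{j,k}^{\pm} = |\pi_k|\,d_{k,j}^{\pm}$, so $d_{j,k}^{\pm}$ and $d_{k,j}^{\pm}$ agree in sign and $\sqrt{|\pi_j|/|\pi_k|} = \sqrt{|d_{k,j}^{\pm}|/|d_{j,k}^{\pm}|}$ when $d_{j,k}^{\pm}\neq 0$ (both entries being $0$ otherwise, interpreting the definition of $A(\SG/\pi)$ accordingly). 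Substituting, $\hat{B}_{j,k} = \sgn(d_{j,k}^{\pm})\sqrt{|d_{j,k}^{\pm}\,d_{k,j}^{\pm}|} = \bra{\pi_j}A(\SG/\pi)\ket{\pi_k}$. In particular $\hat{B}$ is symmetric, which finally gives (3): $A(\SG)(Q_{\pi}Q_{\pi}^{T}) = Q_{\pi}\hat{B}Q_{\pi}^{T}$ is then symmetric, hence equals its own transpose $Q_{\pi}\hat{B}^{T}Q_{\pi}^{T} = (Q_{\pi}Q_{\pi}^{T})A(\SG)$, using that $A(\SG)$ is symmetric. (Equivalently, $A(\SG)P_{\pi}=P_{\pi}B$ says the column space of $P_{\pi}$ is $A(\SG)$-invariant; as $A(\SG)$ is symmetric its orthogonal complement is invariant too, so $A(\SG)$ commutes with the orthogonal projection $Q_{\pi}Q_{\pi}^{T}$ onto that space.)

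I expect the main obstacle to be the bookkeeping in this last step: confirming $\hat{B} = A(\SG/\pi)$ through the sign factor $d_{j,k}^{\pm}/|d_{j,k}^{\pm}|$, handling the degenerate case $d_{j,k}^{\pm} = 0$ cleanly, and keeping the two double-counting identities (for $G^{+}$ and for $G^{-}$) straight. Once $A(\SG)P_{\pi} = P_{\pi}B$ is in hand, the rest of the lemma is routine linear algebra.
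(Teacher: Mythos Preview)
Your proof is correct and uses the same underlying ingredients as the paper: orthonormality of the normalized characteristic vectors for (1)--(2), the equitable-partition relation expressed as $A(\SG)P_{\pi}=P_{\pi}B$, and the double-counting identity $|\pi_{j}|\,d_{j,k}^{\pm}=|\pi_{k}|\,d_{k,j}^{\pm}$ for (3)--(4). The only organizational difference is that the paper proves (3) by a direct entry-wise computation (implicitly invoking the same double-counting identity), whereas you deduce (3) from the symmetry of $\hat{B}=A(\SG/\pi)$ established in (4); both routes are equivalent and equally valid.
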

\begin{proof}
The first two properties hold since the columns of $Q_{\pi}$ are the normalized characteristic vectors 
of the partition matrix $P_{\pi}$.
The third property holds since for any $a$ and $b$ we have
\begin{equation}
\bra{b}A(\SG)Q_{\pi}Q_{\pi}^{T}\ket{a}
	= \frac{1}{|\pi(a)|} \sum_{u \in \pi(a)} \bra{b}A(\Sigma)\ket{u} 
	= \frac{1}{|\pi(b)|} \sum_{v \in \pi(b)} \bra{v}A(\Sigma)\ket{a}
	= \bra{b}Q_{\pi}Q_{\pi}^{T}A(\Sigma)\ket{a}.
\end{equation}
Given that $\pi$ is equitable for both $G^{+}$ and $G^{-}$, we have
\begin{equation}
d^{\pm}_{j,k}|\pi_{j}| = d^{\pm}_{k,j}|\pi_{k}|.
\end{equation}
Since $A(\SG) = A(G^{+}) - A(G^{-})$, we have
\begin{equation}
\bra{\pi_{j}}Q_{\pi}^{T} A(\SG) Q_{\pi}\ket{\pi_{k}}
	= \frac{\bra{\pi_{j}}P_{\pi}^{T}A(\SG)P_{\pi}\ket{\pi_{k}}}{\sqrt{|\pi_{j}||\pi_{k}|}} 
	= \frac{d^{\pm}_{j,k}}{\sqrt{|\pi_{j}||\pi_{k}|}} 
	= \pm \sqrt{|d^{\pm}_{j,k}d^{\pm}_{k,j}|},
\end{equation}
where the sign is $+1$ if $d^{\pm}_{j,k} > 0$, and $-1$ otherwise.
This proves property the last property.
\end{proof}

\vspace{.1in}
\par\noindent
The following theorem generalizes a result in Bachman \etal \cite{bfflott12} on
the equivalence of perfect state transfer on unsigned graphs and their quotients.
For completeness, we provide the proof which follows from Lemma \ref{lemma:equitable}.

\begin{theorem} \label{thm:pst-quotient}
Let $\SG$ be a signed graph with an equitable partition $\pi$ 
where vertices $a$ and $b$ belong to singleton cells.
Then, for any time $t$
\begin{equation}
\bra{b}\exp(-itA(\SG))\ket{a} = \bra{\pi(b)}\exp(-itA(\SG/\pi))\ket{\pi(a)}.
\end{equation}
\end{theorem}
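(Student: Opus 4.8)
The plan is to leverage Lemma~\ref{lemma:equitable} to move the quantum walk evolution between $A(\SG)$ and $A(\SG/\pi)$ via the isometry $Q_\pi$. First I would observe that since $A(\SG/\pi) = Q_\pi^T A(\SG) Q_\pi$ (property 4) and $Q_\pi^T Q_\pi = I_{|\pi|}$ (property 1), for any polynomial $p$ one has $p(A(\SG/\pi)) = Q_\pi^T p(A(\SG)) Q_\pi$ \emph{provided} the relevant cross terms collapse. The cleanest way to see this is to note that $R := Q_\pi Q_\pi^T$ is the orthogonal projection onto the column space of $Q_\pi$ (by properties 1 and 2 it is idempotent and symmetric), and by property 3 it commutes with $A(\SG)$. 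Hence $A(\SG)$ preserves the range of $R$, so restricting $A(\SG)$ to that subspace and conjugating by $Q_\pi$ gives exactly $A(\SG/\pi)$; more precisely $A(\SG)^k Q_\pi = Q_\pi A(\SG/\pi)^k$ for all $k\ge 0$, which I would prove by induction using $Q_\pi Q_\pi^T A(\SG) Q_\pi = A(\SG) Q_\pi Q_\pi^T Q_\pi = A(\SG)Q_\pi$.

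With that intertwining relation in hand, the next step is to pass to the matrix exponential. Summing the power-series identity $A(\SG)^k Q_\pi = Q_\pi A(\SG/\pi)^k$ against $(-it)^k/k!$ gives $\exp(-itA(\SG))\, Q_\pi = Q_\pi \exp(-itA(\SG/\pi))$. I would note convergence is not an issue since both matrices are finite-dimensional and symmetric (so everything is absolutely convergent). Then I would sandwich this identity between the appropriate bra and ket vectors.

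The final step handles the singleton-cell hypothesis. Since $a$ lies in a singleton cell $\pi(a) = \{a\}$, the column of $Q_\pi$ indexed by $\pi(a)$ is $\frac{1}{\sqrt{1}}\ket{a} = \ket{a}$; that is, $Q_\pi \ket{\pi(a)} = \ket{a}$. Likewise $Q_\pi \ket{\pi(b)} = \ket{b}$, so $\bra{b} Q_\pi = \bra{\pi(b)}$ after transposing. Therefore
\begin{equation}
\bra{b}\exp(-itA(\SG))\ket{a} = \bra{b}\exp(-itA(\SG))\,Q_\pi\ket{\pi(a)} = \bra{b}\,Q_\pi \exp(-itA(\SG/\pi))\ket{\pi(a)} = \bra{\pi(b)}\exp(-itA(\SG/\pi))\ket{\pi(a)},
\end{equation}
which is the claim.

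I do not expect a genuine obstacle here; the content is entirely packaged in Lemma~\ref{lemma:equitable}, and the signed case works verbatim like the unsigned one because $A(\SG)$ is still a real symmetric matrix. The only point requiring a little care is making the intertwining $A(\SG)^k Q_\pi = Q_\pi A(\SG/\pi)^k$ rigorous rather than just asserting functional calculus — establishing it at the level of powers by induction (base case trivial, inductive step using property~3 to absorb the extra $Q_\pi Q_\pi^T$) is the cleanest route and avoids any appeal to spectral-decomposition bookkeeping.
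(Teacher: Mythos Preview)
Your proposal is correct and follows essentially the same approach as the paper: both arguments use Lemma~\ref{lemma:equitable} (specifically properties 1, 3, and 4) to establish that powers of $A(\SG)$ and $A(\SG/\pi)$ are intertwined by $Q_\pi$, pass to the exponential via the power series, and then invoke the singleton-cell hypothesis to identify $Q_\pi\ket{\pi(a)}=\ket{a}$ (equivalently $Q_\pi^T\ket{a}=\ket{\pi(a)}$). The only cosmetic difference is that the paper starts from the quotient side and expands $(Q^TA(\SG)Q)^k$ directly, whereas you first isolate the intertwining identity $A(\SG)^kQ_\pi = Q_\pi A(\SG/\pi)^k$ by induction; the content is the same.
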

\begin{proof} 
Since $A(\SG)$ commutes with $QQ^{T}$, we have $(QQ^{T}A(\SG))^{k} = A(\SG)^{k}QQ^{T}$ for $k \ge 1$.
Given that $a$ and $b$ are in singleton cells, $\ket{\pi(a)} = Q^{T}\ket{a}$ and $\ket{\pi(b)} = Q^{T}\ket{b}$. 
Thus, we have
\begin{eqnarray}
\bra{\pi(b)}e^{-itA(\SG/\pi)}\ket{\pi(a)}
	& = & \bra{\pi(b)} e^{-it Q^{T}A(\SG)Q} \ket{\pi(a)} \\
	& = & \bra{b}Q \left[ \sum_{k=0}^{\infty} \frac{(-it)^{k}}{k!} (Q^{T}A(\SG)Q)^{k} \right] Q^{T}\ket{a} \\
	& = & \bra{b} \left[ \sum_{k=0}^{\infty} \frac{(-it)^{k}}{k!} (QQ^{T}A(\SG))^{k} \right] QQ^{T}\ket{a} \\
	& = & \bra{b} e^{-it A(\SG)} QQ^{T}\ket{a},
\end{eqnarray}
which proves the claim since $QQ^{T}\ket{a} = \ket{a}$ because $a$ belongs to a singleton cell.
\end{proof}


\section{Many-Particle Quantum Walks}

In this section, we describe graph operators which arise naturally from $k$-particle quantum walk 
on an underlying graph $G$. These operators include symmetric powers, Cartesian quotients, 
and {\em signed} exterior powers.
\begin{itemize}
\item
The weighted {\em Cartesian quotients} were described by Feder \cite{f06} as a generalization 
of graphs studied by Christandl \etal \cite{cdel04,cddekl05}. The construction in \cite{f06}
is explicitly based on many-boson quantum walks on graphs.
Subsequently, Bachman \etal \cite{bfflott12} showed that these graphs $\symp{G}{k}$ described
by Feder are quotients of $k$-fold Cartesian product modulo a natural equitable partition.

\item
The {\em symmetric powers} $\veep{G}{k}$ were studied by Audenaert \etal \cite{agrr07}
in the context of graph isomorphism on strongly regular graphs.
These graphs are equivalent to the $k$-tuple vertex graphs introduced by Zhu \etal \cite{zlla92}
and were were later studied by Osborne \cite{o06} in connection with quantum spin networks.
It can be shown that these graphs are based on many-particle quantum walks with hardcore bosons.

\item
The {\em exterior powers} $\wedgep{G}{k}$ and their perfect state transfer properties as {\em signed} graphs
will be our main focus in this section. We formally define these graphs in the next section and connect them
with many-fermion quantum walks.
\end{itemize}
A common trait shared by these families of graphs is that they are derived from a $k$-fold Cartesian product
$G^{\Box k}$ of an underlying graph $G$. The $k$-fold Cartesian product represents, in a natural way,
a $k$-particle quantum walk on the graph $G$ where the particles are distinguishable.
Here, the edges of the Cartesian product represent scenarios where a single particle hops
from its current vertex to a neighboring one.
Since quantum particles are either fermions or bosons, this influences the resulting graph-theoretic 
construction in an interesting manner.

In the fermionic regime, no two particle may occupy the same vertex due to the Pauli exclusion principle.
Mathematically, this requires the removal of a ``diagonal'' set $\mathcal{D}$
consisting of all $k$-tuples with a repeated vertex.
On the other hand, in the bosonic regime, the particles are allowed to occupy the same vertex.
But in the so-called {\em hardcore bosonic} model, we are also required to remove the diagonal
since these bosons are not allowed to occupy the same vertex
although they lack the antisymmetric exchange property present in fermions.

Since quantum particles are indistinguishable, the final step in these constructions ``collapses''
certain configurations of the particles in the $k$-fold Cartesian product. Mathematically, this
simply involves a conjugation by either the symmetrizer (bosonic) or the anti-symmetrizer (fermionic);
more specifically, this is a projection to either the symmetric or anti-symmetric subspace of
the underlying $k$-fold tensor product space (see Bhatia \cite{bhatia}).

Therefore, the full construction of a graph $\GG$ based on a $k$-particle quantum walk on $G$
may be described via its adjacency matrix as:
\begin{equation} \label{eqn:quotient}
A(\GG) = \mathcal{Q}^{\dagger} A(G^{\Box k} \setminus \mathcal{D}) \mathcal{Q}.
\end{equation}
In Equation (\ref{eqn:quotient}), the matrix $\mathcal{Q}$ is
the symmetrizer $P_{\vee}$ in the case of Cartesian quotients,
the anti-symmetrizer $P_{\wedge}$ in the case of exterior powers,
or a hybrid in the case of symmetric powers.
Here, we borrow the notation $P_{\vee}$ and $P_{\wedge}$ from Bhatia \cite{bhatia}
(see also Audenaert \etal \cite{agrr07}).
The diagonal set $\mathcal{D}$ is removed in all constructions except for 
Cartesian quotients; for exterior powers, this diagonal removal is implicitly done 
via the anti-symmetrizer $P_{\wedge}$.
Overall, we will view $\GG$ as a ``quotient'' of $G^{\Box k}$ modulo a suitable 
equitable partition. 

\paragraph{Cartesian quotients}
For a graph $G$ on $n$ vertices and a positive integer $k$,
the Cartesian quotient $\symp{G}{k}$ is a graph whose vertex set is the
set of $n$-tuple of non-negative integers whose sum is $k$, that is,
\begin{equation}
V(\symp{G}{k}) = \left\{a \in \NN^{V(G)}: \sum_{v \in V(G)} a_{v} = k\right\},
\end{equation}
and whose edges are the pairs $(a,b)$ which satisfy
\begin{equation}
(\exists (u,v) \in E(G))[b_{u}=a_{u}-1 \ \wedge \ b_{v}=a_{v}+1 \ \wedge \ (\forall w \not\in \{u,v\})[a_{w}=b_{w}]]
\end{equation}
with an edge weight of $\sqrt{(a_{u}-1)(a_{v}+1)}$.
This construction corresponds to a $k$-{\bf boson} quantum walk on $G$ 
where the adjacency matrix of $\symp{G}{k}$ is given by
\begin{equation}
A(\symp{G}{k}) = P_{\vee}^{\dagger} A(G^{\Box k}) P_{\vee},
\end{equation}
and the resulting graph is weighted and unsigned.
This class of graphs was studied by Feder \cite{f06} and then by Bachman \etal \cite{bfflott12}.

\begin{figure}[t]
\begin{center}
\begin{tabular}{|c|c|c|c|} \hline
{Graph}                            & {Notation}  &   {Particle}       & {Citation} \\ \hline \hline
{\em weighted} Cartesian quotients & $\symp{G}{k}$  &   bosonic             & Feder \cite{f06}, Bachman \etal \cite{bfflott12} \\ \hline
symmetric powers & $G^{\{k\}}$                      &   hardcore bosonic    & Audenaert \etal \cite{agrr07} \\
$k$-tuple vertex graphs & $U_{k}(G)$                    &                       & Zhu \etal \cite{zlla92}, Osborne \cite{o06} \\ \hline
{\em signed} exterior powers & $\wedgep{G}{k}$      &   fermionic           & this work \\ \hline
\end{tabular}
\caption{Graph operators from many-particle quantum walks on graphs.}
\end{center}
\end{figure}

\paragraph{Symmetric powers}
For a graph $G$ on $n$ vertices and a positive integer $k$ where $1 \le k \le n$,
the symmetric $k$th power $\veep{G}{k}$ of $G$ 
is a graph whose vertices are the $k$-subsets of $V(G)$ and whose edges consist of pairs of
$k$-subsets $(A,B)$ for which $A \symdif B \in E(G)$.
This construction corresponds to a $k$-{\bf hardcore boson} quantum walk on $G$ where
bosons are not allowed to occupy the same vertex but without the fermionic antisymmetric exchange rule.
Here, we have
\begin{equation}
A(\veep{G}{k}) = (P^{(k)})^{\dagger} A(G^{\Box k} \setminus \mathcal{D}) P^{(k)},
\end{equation}
where $P^{(k)}$ is a ``signless'' variant of $P_{\wedge}$ (see Audenaert \etal \cite{agrr07})
and $\mathcal{D}$ consists of $k$-tuples from $V^{k}$ which contain a repeated vertex.
This class of unweighted and unsigned graphs was studied by Audenaert \etal \cite{agrr07}
and also by Zhu \etal \cite{zlla92} and Osborne \cite{o06}.

\paragraph{Exterior powers}
For a graph $G$ on $n$ vertices and a positive integer $k$ where $1 \le k \le n$,
the exterior $k$th power $\wedgep{G}{k}$ 
is a graph whose vertices are the {\em ordered} $k$-subsets of $V(G)$ and whose edges consist of
pairs of ordered $k$-subsets $(A,B)$ for which $A \symdif B \in E(G)$. Moreover, if $\pi \in S_{k}$
is the unique permutation which yields $A_{\pi(i)} = B_{i}$, for all $i$ except for one $j$
where $(A_{\pi(j)},B_{j}) \in E(G)$, then the edge $(A,B)$ is assigned the sign $\sgn(\pi)$.
This construction is based on a $k$-{\bf fermion} quantum walk on $G$.
Here, we have
\begin{equation}
A(\xwedgep{G}{k}) = P_{\wedge}^{\dagger} A(G^{\Box k}) P_{\wedge},
\end{equation}
where $P_{\wedge}$ implicitly removes a diagonal set consisting of $k$-tuples from $V^{k}$ 
with repeated vertices.
The resulting graph is {\em signed}.
We remark that the graphs $U_{k}(G)$ studied in \cite{zlla92,o06} are simply
the exterior powers $\wedgep{G}{k}$ with the signs ignored.

We will define the exterior power $\wedgep{G}{k}$ more formally in the following section.

\subsection{Exterior Powers}

Given a graph $G=(V,E)$ and a positive integer $k$, where $1 \le k \le |V|-1$, 
the {\em exterior $k$th power} of $G$, which is denoted $\wedgep{G}{k}$, is defined as follows. 
Suppose $V = \{v_{1},\ldots,v_{n}\}$ is the vertex set of $G$.
We fix some (arbitrary) total ordering $\prec$ on $V$, say, $v_{1} \prec \ldots \prec v_{n}$. 
Let $\binom{V}{k}$ be the $k$-subsets of $V$ where each subset is ordered according to $\prec$.
That is, we have
\begin{equation}
\binom{V}{k} = \{(u_{1},\ldots,u_{k}) \in V^{k} : u_{1} \prec \ldots \prec u_{k}\}.
\end{equation}
The elements of $\binom{V}{k}$ are often denoted as a ``wedge'' product
$u_{1} \wedge \ldots \wedge u_{k}$, again under the condition that $u_{1} \prec \ldots \prec u_{k}$.

The vertex set of $\wedgep{G}{k}$ is $\binom{V}{k}$.
The edge set of $\wedgep{G}{k}$ consists of pairs $\bigwedge_{j=1}^{k} u_{j}$ and $\bigwedge_{j=1}^{k} v_{j}$ 
for which there is a bijection $\pi$ over $[n]$ so that $u_{\pi(j)} = v_{j}$ for all $j$ 
except at one index $i$ where $(u_{\pi(i)},v_{i}) \in E(G)$;
most importantly, the {\em sign} of this edge is defined to be $\sgn(\pi)$.
Thus, $\wedgep{G}{k}$ is a {\em signed} graph on $\binom{n}{k}$ vertices.
In summary, we have
\begin{eqnarray}
V(\xwedgep{G}{k}) & = & \binom{V}{k} \\
E(\xwedgep{G}{k}) & = &
	\left\{(\mbox{$\bigwedge_{j=1}^{k} u_{j}$}, \mbox{$\bigwedge_{j=1}^{k} v_{j}$}) : 
		(\exists i)[(u_{\pi(i)},v_{i}) \in E \ \wedge \ (\forall j \neq i)[u_{\pi(j)} = v_{j}]]\right\}
\end{eqnarray}

\vspace{.1in}
\par\noindent{\em Example}: 
Let $V = \{a,b,c,d\}$ be ordered as $a \prec b \prec c \prec d$. Then, 
\begin{equation}
\binom{V}{2} = \{a \wedge b, a \wedge c, a \wedge d, b \wedge c, b \wedge d, c \wedge d\}.
\end{equation}
Suppose $G$ is a $4$-cycle defined on $V$ where $a$ and $d$ are not adjacent but are both adjacent 
to $b$ and $c$; see Figure \ref{fig:ext-c4}.
Here, $\wedgep{G}{2}$ is a signed $K_{2,4}$ with the bipartition given by
$\{a \wedge d, b \wedge c\}$ and $\{a \wedge b, a \wedge c, b \wedge d, c \wedge d\}$.
All edges are positive except for the edges between $a \wedge b$ and $b \wedge c$ 
(since this requires a transposition to ``align'' the common vertex $b$)
and between $c \wedge d$ and $b \wedge c$ (by a similar reasoning). 
\vspace{.2in}

\begin{figure}[t]
\begin{center}
\begin{tikzpicture}
%
\draw[line width=0.25mm] (0,-1)--(-1,0)--(0,1)--(1,0)--(0,-1);

\foreach \y in {-1,1}
    \node at (0,\y)[circle, fill=black][scale=0.5]{};
\foreach \x in {-1,1}
    \node at (\x,0)[circle, fill=black][scale=0.5]{};

\node at (0,-1.3)[scale=0.9]{$a$};
\node at (-1.3,0)[scale=0.9]{$b$};
\node at (0,+1.3)[scale=0.9]{$d$};
\node at (+1.3,0)[scale=0.9]{$c$};

%
\foreach \y in {-1}
    \foreach \x in {4.25,6.25,7.75,9.75}
        \draw[line width=0.2mm] (7,\y)--(\x,0);

\foreach \y in {+1}
    \foreach \x in {6.25,7.75}
        \draw[line width=0.2mm] (7,\y)--(\x,0);

\foreach \y in {+1}
    \foreach \x in {4.25,9.75}
        \draw[dashed] (7,\y)--(\x,0);

\foreach \x in {4.25,6.25,7.75,9.75}
    \node at (\x,0)[circle, fill=black][scale=0.5]{};
\node at (3.6,0)[scale=0.9]{$a \wedge b$};
\node at (5.6,0)[scale=0.9]{$a \wedge c$};
\node at (8.4,0)[scale=0.9]{$b \wedge d$};
\node at (10.4,0)[scale=0.9]{$c \wedge d$};

\foreach \y in {-1,1}
    \node at (7,\y)[circle, fill=black][scale=0.5]{};
\node at (7,+1.3)[scale=0.9]{$b \wedge c$};
\node at (7,-1.3)[scale=0.9]{$a \wedge d$};



\end{tikzpicture}
\caption{(a) The four-cycle $C_{4}$; (b) Its exterior square $\wedgep{C_{4}}{2}$
under the alphabetic vertex ordering $a \prec b \prec c \prec d$. Dashed edges are negatively signed.
}
\label{fig:ext-c4}
\end{center}
\hrule
\end{figure}
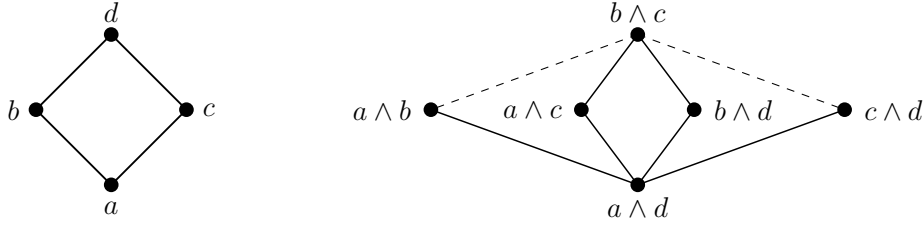

The algebraic view of $\wedgep{G}{k}$ will be more convenient for our purposes.
For a permutation $\pi \in S_{k}$ and a $k$-tuple $u = (u_{1},\ldots,u_{k})$,
we denote $\pi(u)$ to mean the element $(u_{\pi(1)},\ldots,u_{\pi(k)})$.
We consider the {\em anti-symmetrizer} operator $\Alt_{n,k}$ defined as
\begin{equation}
\Alt_{n,k} = \sum_{v \in \binom{V}{k}, \pi \in S_{k}} \frac{\sgn(\pi)}{\sqrt{k!}} \ketbra{\pi(v)}{v}.
\end{equation}
So, $\Alt_{n,k}$ is an injective map from the exterior vector space $\CC^{\binom{V}{k}}$ 
to the tensor product space $\CC^{V^{\otimes k}}$. For each $v \in \binom{V}{k}$, we have
\begin{equation}
\Alt_{n,k}\ket{v} = \sum_{\pi \in S_{k}} \frac{\sgn(\pi)}{\sqrt{k!}} \ket{\pi(v)}.
\end{equation}
Moreover, $\Alt_{n,k}^{T}$ is a surjective map from $\CC^{V^{\otimes k}}$ to $\CC^{\binom{V}{k}}$.
For each $u \in V^{\otimes k}$, 
\begin{equation}
\Alt_{n,k}^{T}\ket{u} = \frac{\sgn(\pi)}{\sqrt{k!}} \ket{v},
\end{equation}
where $v \in \binom{V}{k}$ satisfies $\pi(u) = v$;
here, $\pi$ is the permutation that orders $u$ according to the total order $\prec$.
Next, we show that $\Alt_{n,k}$ defines a ``signed'' equitable partition on $G^{\Box k}$.
This shows that $\wedgep{G}{k}$ is the quotient graph of $G^{\Box k}$ induced by $\Alt_{n,k}$.

\begin{lemma} 
Let $G$ be a graph on $n$ vertices. For each positive integer $k$ with $1 \le k \le n-1$:
\begin{enumerate}
\item $\Alt_{n,k}^{T}\Alt_{n,k} = I_{\binom{n}{k}}$.
\item $\Alt_{n,k} \Alt_{n,k}^{T}$ commutes with $A(G^{\Box k})$.
\item $A(\wedgep{G}{k}) = \Alt_{n,k}^{T} A(G^{\Box k}) \Alt_{n,k}$.
\end{enumerate}
\end{lemma}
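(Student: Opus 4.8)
The three claims are the ``signed'' analogue of the equitable-partition package (Lemma~\ref{lemma:equitable}) in which the normalized partition matrix $Q_\pi$ is replaced by the anti-symmetrizer $\Alt_{n,k}$. My plan is to treat each claim in turn, exactly mirroring the unsigned argument but carrying the signs $\sgn(\pi)$ through. For (1), I would compute $\Alt_{n,k}^{T}\Alt_{n,k}$ directly from the definitions: for $v,w\in\binom{V}{k}$,
\begin{equation}
\bra{w}\Alt_{n,k}^{T}\Alt_{n,k}\ket{v}
 = \frac{1}{k!}\sum_{\pi,\rho\in S_k}\sgn(\pi)\sgn(\rho)\braket{\rho(w)}{\pi(v)}.
\end{equation}
Since $v$ and $w$ are already $\prec$-ordered, $\rho(w)=\pi(v)$ as ordered $k$-tuples forces $v=w$ and $\rho=\pi$ (the tuples have distinct entries), so the sum collapses to $\frac{1}{k!}\sum_{\pi}\sgn(\pi)^2=1$ when $v=w$ and to $0$ otherwise; hence $\Alt_{n,k}^{T}\Alt_{n,k}=I_{\binom{n}{k}}$. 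Note this is precisely the place where $1\le k\le n-1$ (so $\binom Vk\neq\emptyset$) and the distinctness of entries in a genuine $k$-subset is used.

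For (3), I would first record what $E:=\Alt_{n,k}\Alt_{n,k}^{T}$ does on the tensor space. For a $k$-tuple $u=(u_1,\dots,u_k)\in V^{\otimes k}$ with \emph{distinct} entries, $\Alt_{n,k}^{T}\ket u=\frac{\sgn(\pi_u)}{\sqrt{k!}}\ket{v_u}$ where $v_u=\pi_u(u)$ is the sorted version, and then $\Alt_{n,k}\ket{v_u}=\sum_\rho\frac{\sgn(\rho)}{\sqrt{k!}}\ket{\rho(v_u)}$, so $E\ket u=\frac1{k!}\sum_{\rho}\sgn(\pi_u)\sgn(\rho)\ket{\rho(v_u)}$; reindexing $\rho$ by $\tau=\rho\pi_u^{-1}$ gives $E\ket u=\frac1{k!}\sum_{\tau\in S_k}\sgn(\tau)\ket{\tau(u)}$, i.e.\ $E$ is the standard anti-symmetrizing projection on the ``distinct-support'' sector and kills any $u$ with a repeated vertex. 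Then I would verify property (2), $E\,A(G^{\Box k})=A(G^{\Box k})\,E$, which is exactly the statement that the Cartesian-product adjacency matrix is $S_k$-equivariant: $A(G^{\Box k})$ is $\sum_{j}I^{\otimes(j-1)}\otimes A(G)\otimes I^{\otimes(k-j)}$, and conjugating by the permutation operator $P_\tau:\ket u\mapsto\ket{\tau(u)}$ merely permutes the $k$ summands, so $P_\tau A(G^{\Box k})P_\tau^{-1}=A(G^{\Box k})$; since $E$ is a $\sgn$-weighted average of the $P_\tau$'s, it commutes with $A(G^{\Box k})$. (One small point to check: $A(G^{\Box k})$ maps distinct-support tuples into the span of distinct-support tuples? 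No --- it does not, since a hop can collide two coordinates; but $E$ annihilates the bad part, and that is all that is needed, because $E$ is the projector onto the anti-symmetric sector where colliding configurations automatically vanish. I will phrase (2) so that this is transparent.)

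For claim (3) proper, I would then compute $\bra w \Alt_{n,k}^{T} A(G^{\Box k})\Alt_{n,k}\ket v$ for $v,w\in\binom Vk$ and match it to the edge/sign rule defining $A(\wedgep Gk)$. Expanding, $\Alt_{n,k}\ket v=\frac1{\sqrt{k!}}\sum_\pi\sgn(\pi)\ket{\pi(v)}$, and $A(G^{\Box k})\ket{\pi(v)}$ is the sum over single-coordinate hops of $\pi(v)$; applying $\Alt_{n,k}^{T}$ and collecting, a hop from $\pi(v)$ to some $u$ contributes $\frac{\sgn(\pi)\sgn(\pi_u)}{k!}$ to the coefficient of the sorted tuple $v_u$. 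Grouping these contributions one finds the total coefficient of $\ket w$ equals $\pm 1$ precisely when $v$ and $w$ differ by a single edge of $G$ after an optimal re-alignment permutation $\sigma$, with sign $\sgn(\sigma)$ --- which is exactly the definition of the edge sign in $\wedgep Gk$ --- and $0$ otherwise. Here the $1/k!$ normalization cancels against the $k!$-fold overcounting coming from the free choice of $\pi$ (equivalently: once the aligning permutation $\sigma$ is fixed, all $\pi$ differing from it by a permutation fixing the hopped coordinate give the same contribution, and these signs add coherently rather than cancel, because $\sgn$ is multiplicative).

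\textbf{Main obstacle.} The genuinely delicate step is claim (3): keeping the bookkeeping of permutations and their signs straight while showing that the many terms in $\Alt_{n,k}^{T}A(G^{\Box k})\Alt_{n,k}$ coalesce into a single $\pm1$ with the sign $\sgn(\sigma)$ matching the combinatorial definition of the edge sign. In particular one must check that the contributions do not partially cancel --- this is where antisymmetry (distinct entries, no repeated-vertex configurations surviving $P_\wedge$) is essential, and it is also where the $1\le k\le n-1$ hypothesis and the ``unique aligning permutation'' clause in the definition of $E(\wedgep Gk)$ get used. Claims (1) and (2) are essentially formal once the action of $\Alt_{n,k}$ on tensors is written down cleanly, so I would present them briskly and spend the bulk of the proof on the sign-coalescence computation in (3), probably isolating it as a short explicit calculation for a fixed pair $(v,w)$ at Hamming-type distance one in $G$.
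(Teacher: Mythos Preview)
Your plan for (1) and (2) is correct and essentially identical to the paper's argument: columns of $\Alt_{n,k}$ are orthonormal, and $\Alt_{n,k}\Alt_{n,k}^{T}=\frac{1}{k!}\sum_{\pi}\sgn(\pi)P_{\pi}$ commutes with $A(G^{\Box k})$ because each $P_{\pi}$ does.

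For (3) your direct hop-by-hop expansion is correct, but the paper sidesteps exactly the sign-coalescence bookkeeping you flag as the main obstacle. It writes
\[
\bra{b}\Alt_{n,k}^{T}A(G^{\Box k})\Alt_{n,k}\ket{a}
=\frac{1}{k!}\sum_{\pi_{1},\pi_{2}}\sgn(\pi_{1}\circ\pi_{2})\,\bra{b}P_{\pi_{1}}^{T}A(G^{\Box k})P_{\pi_{2}}\ket{a},
\]
uses the commutativity just established in (2) to slide $P_{\pi_{1}}^{T}$ past $A(G^{\Box k})$ and merge it with $P_{\pi_{2}}$, and then reindexes the double sum to the single sum $\sum_{\pi}\sgn(\pi)\,\bra{b}A(G^{\Box k})P_{\pi}\ket{a}$. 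Since $a,b\in\binom{V}{k}$ are already $\prec$-ordered and have distinct entries, at most one $\pi$ can make $P_{\pi}\ket{a}$ adjacent to $b$ in $G^{\Box k}$, so the sum is manifestly $0$ or $\pm 1$ with sign $\sgn(\pi)$ matching the edge-sign definition of $\wedgep{G}{k}$; no grouping or cancellation analysis is needed. Your route reaches the same destination but does by hand what the commutativity from (2) does for free.
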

\begin{proof}
The first property holds since the columns of $\Alt_{n,k}$ are normalized and form a vertex partition.
For the second property, we note that for each $a \in V^{\otimes k}$:
\begin{equation}
\Alt_{n,k} \Alt_{n,k}^{T} \ket{a} = 
	\frac{1}{k!} \sum_{\pi \in S_{k}} \sgn(\pi) P_{\pi}\ket{a},
\end{equation}
where $P_{\pi}$ denotes the permutation matrix which encodes the action of $\pi$ on $V^{\otimes k}$.
Moreover, $P_{\pi} A(G^{\Box k}) = A(G^{\Box k}) P_{\pi}$ for each permutation $\pi \in S_{k}$.
Therefore, for each $a,b \in V^{\otimes k}$ we have
\begin{eqnarray}
\bra{a} \Alt_{n,k} \Alt_{n,k}^{T} A(G^{\Box k}) \ket{b}
	& = & \frac{1}{k!} \sum_{\pi \in S_{k}} \sgn(\pi) \bra{a} P_{\pi} A(G^{\Box k}) \ket{b} \\
	& = & \frac{1}{k!} \sum_{\pi \in S_{k}} \sgn(\pi) \bra{a} A(G^{\Box k}) P_{\pi} \ket{b} \\
	& = & \bra{a} A(G^{\Box k}) \Alt_{n,k} \Alt_{n,k}^{T} \ket{b}.
\end{eqnarray}
The third property holds since for each $a,b \in \binom{V}{k}$ we have
\begin{eqnarray}
\bra{b} \Alt_{n,k}^{T} A(G^{\Box k}) \Alt_{n,k} \ket{a}
    & = & \frac{1}{k!} \sum_{\pi_{1},\pi_{2}} \sgn(\pi_{1} \circ \pi_{2})
            \bra{b} P_{\pi_{1}}^{T} A(G^{\Box k}) P_{\pi_{2}} \ket{a} \\
    & = & \frac{1}{k!} \sum_{\pi_{1},\pi_{2}} \sgn(\pi_{1} \circ \pi_{2})
            \bra{b} A(G^{\Box k}) P_{\pi_{1} \circ \pi_{2}} \ket{a} \\
    & = & \sum_{\pi} \sgn(\pi) \bra{b} A(G^{\Box k}) P_{\pi} \ket{a}.
\end{eqnarray}
The last expression is either $0$ or $\pm 1$ and agrees with
the definition of adjacency on $\wedgep{G}{k}$.
\end{proof}

\vspace{.1in}
\par\noindent
Next, we show that if a graph $G$ has $k$ disjoint pairs of vertices with perfect state transfer,
then so does the exterior $k$th power of $G$. 

\begin{theorem} \label{thm:pst-ext}
Let $G$ be a graph with perfect state transfer at time $t$ from vertex $a_{j}$ to vertex $b_{j}$, 
for each $j$ with $1 \le j \le k$. 
Suppose that the collection $\{(a_{j},b_{j}) : 1 \le j \le k\}$ are pairwise disjoint.
Then, $\wedgep{G}{k}$ has perfect state transfer from vertex
$\bigwedge_{j=1}^{k} a_{j}$ to vertex $\bigwedge_{j=1}^{k} b_{j}$.
\end{theorem}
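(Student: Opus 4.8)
The plan is to treat $\wedgep{G}{k}$ as a ``signed quotient'' of the Cartesian power $G^{\Box k}$, running the argument of Theorem~\ref{thm:pst-quotient} with the normalized partition matrix replaced by the anti-symmetrizer $\Alt_{n,k}$. The lemma immediately above supplies the three ingredients I need: $\Alt_{n,k}^{T}\Alt_{n,k} = I_{\binom{n}{k}}$, the operator $\Alt_{n,k}\Alt_{n,k}^{T}$ is a projection that commutes with $A(G^{\Box k})$, and $A(\wedgep{G}{k}) = \Alt_{n,k}^{T}\,A(G^{\Box k})\,\Alt_{n,k}$.

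\emph{Step 1 (lift the exponential).} Exactly as in the proof of Theorem~\ref{thm:pst-quotient}, I would show $\bigl(\Alt_{n,k}^{T} A(G^{\Box k}) \Alt_{n,k}\bigr)^{m} = \Alt_{n,k}^{T}\,A(G^{\Box k})^{m}\,\Alt_{n,k}$ for every $m\ge 0$: each internal factor $\Alt_{n,k}\Alt_{n,k}^{T}$ commutes past $A(G^{\Box k})$ and then collapses using idempotence together with $\Alt_{n,k}^{T}\Alt_{n,k}=I$. Summing the exponential series gives
\begin{equation}
e^{-itA(\wedgep{G}{k})} = \Alt_{n,k}^{T}\, e^{-itA(G^{\Box k})}\, \Alt_{n,k}.
\end{equation}

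\emph{Step 2 (evaluate on the Cartesian power).} As noted after Lemma~\ref{lemma:balanced}, $A(G^{\Box k})$ is a sum of $k$ commuting terms, so $e^{-itA(G^{\Box k})} = U(t)^{\otimes k}$ with $U(t)=e^{-itA(G)}$. Perfect state transfer from $a_{j}$ to $b_{j}$ at time $t$ means $U(t)\ket{a_{j}} = \gamma_{j}\ket{b_{j}}$ for some scalar $\gamma_{j}$ with $|\gamma_{j}|=1$. Applying $U(t)^{\otimes k}$ term-by-term to
\begin{equation}
\Alt_{n,k}\ket{\bigwedge_{j=1}^{k} a_{j}} = \frac{1}{\sqrt{k!}}\sum_{\pi\in S_{k}}\sgn(\pi)\,\ket{(a_{\pi(1)},\ldots,a_{\pi(k)})}
\end{equation}
turns each factor $\ket{a_{\pi(i)}}$ into $\gamma_{\pi(i)}\ket{b_{\pi(i)}}$; the overall scalar $\prod_{i}\gamma_{\pi(i)} = \prod_{j}\gamma_{j} =: \Gamma$ is independent of $\pi$ (with $|\Gamma|=1$), and the sign $\sgn(\pi)$ is untouched because the same $\pi$ now permutes the $b_{j}$'s. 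Hence $U(t)^{\otimes k}\Alt_{n,k}\ket{\bigwedge_{j} a_{j}} = \Gamma\,\Alt_{n,k}\ket{\bigwedge_{j} b_{j}}$. Here the pairwise disjointness of $\{(a_{j},b_{j})\}$ is exactly what guarantees that $\bigwedge_{j} a_{j}$ and $\bigwedge_{j} b_{j}$ are honest vertices of $\wedgep{G}{k}$ (distinct entries within each tuple, so these anti-symmetrized vectors are nonzero basis states).

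\emph{Step 3 (conclude).} Combining Steps~1 and~2 and using $\Alt_{n,k}^{T}\Alt_{n,k}=I$ once more,
\begin{equation}
\bra{\bigwedge_{j} b_{j}}\,e^{-itA(\wedgep{G}{k})}\,\ket{\bigwedge_{j} a_{j}} = \Gamma\,\bra{\bigwedge_{j} b_{j}}\Alt_{n,k}^{T}\Alt_{n,k}\ket{\bigwedge_{j} b_{j}} = \Gamma,
\end{equation}
which has modulus $1$, giving perfect state transfer. I do not anticipate a serious obstacle; the only point requiring care is the bookkeeping in Step~2 — verifying that the antisymmetric signs line up after the permuted sources become permuted targets and that the accumulated phase is permutation-independent — together with the observation that disjointness is precisely the hypothesis making the source and target legitimate distinct vertices of the exterior power.
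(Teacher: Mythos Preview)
Your proposal is correct and follows essentially the same route as the paper: both arguments lift the walk on $\wedgep{G}{k}$ to $G^{\Box k}$ via the identity $e^{-itA(\wedgep{G}{k})} = \Alt_{n,k}^{T}\,e^{-itA(G^{\Box k})}\,\Alt_{n,k}$ (which the paper invokes in one line and you justify in Step~1), and then exploit the tensor factorization $e^{-itA(G^{\Box k})}=U(t)^{\otimes k}$ together with perfect state transfer on each factor. Your write-up is in fact more explicit than the paper's---you track the phase $\Gamma$, explain why the permutation signs align, and spell out that pairwise disjointness is exactly what makes $\bigwedge_j a_j$ and $\bigwedge_j b_j$ legitimate vertices---whereas the paper compresses these points into a terminal ``This proves the claim''; the only cosmetic wrinkle is that if the $b_j$ are not listed in the same $\prec$-order as the $a_j$ your $\Gamma$ may acquire an extra sign $\sgn(\tau^{-1}\sigma)$, but this does not affect the modulus.
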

\begin{proof}
Let $\hat{a} = \bigwedge_{j=1}^{k} a_{j}$ and $\hat{b} = \bigwedge_{j=1}^{k} b_{j}$.
The quantum walk on $\wedgep{G}{k}$ from $\hat{a}$ to $\hat{b}$ is given by
\begin{eqnarray}
\bra{\hat{b}} \exp\left[-itA(\xwedgep{G}{k})\right] \ket{\hat{a}}
	& = & \bra{\hat{b}} \Alt_{n,k}^{T} e^{-itA(G^{\Box k})} \Alt_{n,k} \ket{\hat{a}} \\
	& = & \frac{1}{k!} 
		\left[ \sum_{\pi_{1}} \sgn(\pi_{1}) \bra{\pi_{1}(\hat{b})} \right]
		e^{-itA(G^{\Box k})} 
		\left[ \sum_{\pi_{2}} \sgn(\pi_{2}) \ket{\pi_{2}(\hat{a})} \right] \\
	& = & \frac{1}{k!} 
		\sum_{\pi_{1},\pi_{2}} \sgn(\pi_{1} \circ \pi_{2}) 
		\bra{\pi_{1}(\hat{b})} e^{-itA(G^{\Box k})} \ket{\pi_{2}(\hat{a})} \\
	& = & \frac{1}{k!} 
		\sum_{\pi} \bra{\pi(\hat{b})} e^{-itA(G^{\Box k})} \ket{\pi(\hat{a})}.
\end{eqnarray}
This proves the claim.
\end{proof}

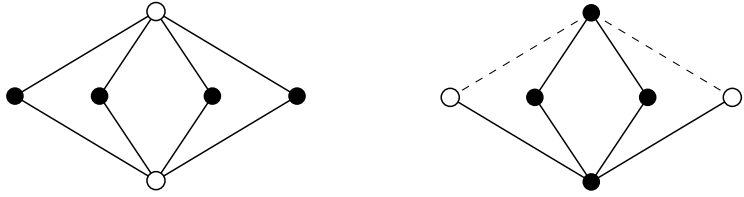
\begin{figure}[t]
\begin{center}
\begin{tikzpicture}[scale=1.5]
\foreach \y in {-0.75,+0.75}
    \foreach \x in {5.75,6.5,7.5,8.25}
        \draw[line width=0.2mm] (7,\y)--(\x,0);

\foreach \y in {-0.75,+0.75}
{
    \node at (7,\y)[circle, fill=white][scale=0.6]{};
    \draw[line width=0.2mm] (7,\y) circle (0.08cm);
}

\foreach \x in {5.75,6.5,7.5,8.25}
    \node at (\x,0)[circle, fill=black][scale=0.6]{};
\end{tikzpicture}	\quad	\quad	\quad	\quad
\begin{tikzpicture}[scale=1.5]
\foreach \y in {-0.75}
    \foreach \x in {5.75,6.5,7.5,8.25}
        \draw[line width=0.2mm] (7,\y)--(\x,0);

\foreach \y in {+0.75}
    \foreach \x in {6.5,7.5}
        \draw[line width=0.2mm] (7,\y)--(\x,0);

\foreach \y in {+0.75}
    \foreach \x in {5.75,8.25}
        \draw[dashed] (7,\y)--(\x,0);

\foreach \x in {6.5,7.5}
    \node at (\x,0)[circle, fill=black][scale=0.6]{};

\foreach \x in {5.75,8.25}
{
    \node at (\x,0)[circle, fill=white][scale=0.6]{};
    \draw[line width=0.2mm] (\x,0) circle (0.08cm);
}

\foreach \y in {-0.75,0.75}
    \node at (7,\y)[circle, fill=black][scale=0.6]{};
\end{tikzpicture}
\caption{Complete bipartite graphs.
(a) The unsigned $K_{2,4}$ has perfect state transfer (since its quotient is $P_{3}$).
(b) The exterior square $\wedgep{C_{4}}{2}$ has perfect state transfer (by Theorem \ref{thm:pst-ext})
but on a different pair of vertices; dashed edges are negatively signed. 
Perfect state transfer occurs between vertices marked white within each case.
}
\label{fig:pst-ext-c4}
\end{center}
\hrule
\end{figure}


\section{Conclusions}

In this work, we studied quantum walks on signed graphs.
Our goal was to understand the effect of negatively signed edges on perfect state transfer.
We explore this question in the cases of Cartesian products and graph joins. 
For the latter, we show that the join of a negative $2$-clique with any $3$-regular graph
has perfect state transfer (unlike the unsigned join); moreover, the perfect state transfer
time improves as the size of the $3$-regular graph increases.
Using the natural spanning-subgraph decomposition of signed graphs, we consider signed graphs
obtained from regular graphs and their complements. This yields signed complete graphs which 
have perfect state transfer; in contrast, unsigned complete graphs do not exhibit 
perfect state transfer but are merely periodic.
Most of our results show that signed graphs from multiple switching-equivalent classes of 
the same underlying graph may simultaneously have perfect state transfer.
Finally, we consider a graph operator called exterior powers which had been studied elsewhere 
(but not in the context of signed graphs). We prove conditions for which the exterior power of 
a graph has perfect state transfer. These exterior powers of graph may be of independent interest 
especially in connection with many particle quantum walk on graphs.


\section*{Acknowledgments}

The research was supported in part by the National Science Foundation grant DMS-1004531
and also by the National Security Agency grant H98230-11-1-0206.
We thank David Feder for helpful comments and for pointing out connection with the hardcore boson model.



\end{document}
